\documentclass[runningheads]{llncs}

\usepackage{graphicx}
\usepackage[utf8]{inputenc}
\usepackage{amsmath,amsfonts}
\usepackage{algorithm}
\usepackage{algorithmic}
\usepackage{multirow}
\usepackage{url}
\usepackage{color}
\usepackage{rotating}

\newenvironment{proof*}[1]
  {
   \begin{proof}}
  {\end{proof}}

\newcommand{\reals}{\mathbb{R}}

\newcommand{\wVec}{\mathbf{w}}
\newcommand{\xVec}{\mathbf{x}}

\newcommand{\cVec}{\mathbf{c}}

\newcommand{\wall}{w_\mathrm{all}}

\newcommand{\hide}[1]{}

\begin{document}

\title{PATHATTACK: Attacking Shortest Paths \\in Complex Networks}

\author{Benjamin~A.~Miller\inst{1} \and
Zohair~Shafi\inst{1} \and
Wheeler~Ruml\inst{2} \and
Yevgeniy~Vorobeychik\inst{3} \and
Tina~Eliassi-Rad\inst{1}\thanks{Point of contact: eliassi@northeastern.edu}\and
Scott~Alfeld\inst{4}}

\authorrunning{B. A. Miller et al.}

\institute{Northeastern University \and
University of New Hampshire \and
Washington University in St. Louis \and
Amherst College}

\maketitle  

\begin{abstract}
Shortest paths in complex networks play key roles in many applications. Examples include routing packets in a computer network, routing traffic on a transportation network, and inferring semantic distances between concepts on the World Wide Web. An adversary with the capability to perturb the graph might make the shortest path between two nodes route traffic through advantageous portions of the graph (e.g., a toll road he owns). In this paper, we introduce the Force Path Cut problem, in which there is a specific route the adversary wants to promote by removing a minimum number of edges in the graph. We show that Force Path Cut is NP-complete, but also that it can be recast as an instance of the Weighted Set Cover problem, enabling the use of approximation algorithms. The size of the universe for the set cover problem is potentially factorial in the number of nodes. To overcome this hurdle, we propose the \texttt{PATHATTACK} algorithm, which via constraint generation considers only a small subset of paths---at most 5\% of the number of edges in 99\% of our experiments. Across a diverse set of synthetic and real networks, the linear programming formulation of Weighted Set Cover yields the optimal solution in over 98\% of cases. We also demonstrate a time/cost tradeoff using two approximation algorithms and greedy baseline methods. This work provides a foundation for addressing similar problems and expands the area of adversarial graph mining beyond recent work on node classification and embedding.

\keywords{Adversarial graph perturbation \and Shortest path \and Constraint generation.}
\end{abstract}

\section{Introduction}
\label{sec:intro}

Across a wide range of applications, finding shortest paths among interconnected entities is an important task. Whether routing traffic on a road network, packets in a computer network, ships in a maritime network, or identifying the ``degrees of separation'' between two actors, locating the shortest path is often key to making efficient use of the interconnected entities. By manipulating the shortest path between two popular entities---e.g., people or locations---those along the altered path could have much to gain from the increased exposure. Countering such behavior is important, and understanding vulnerability to such manipulation is a first step to more robust graph mining.

In this paper, we present the \emph{Force Path Cut} problem in which an adversary wants the shortest path between a source node and a target node in an edge-weighted network to go through a preferred path. The adversary has a fixed budget and achieves this goal by cutting edges, each of which has a cost for removal. We show that this problem is NP-complete via a reduction from the $k$-Terminal Cut problem~\cite{Dahlhaus1994}. To solve the optimization problem associated with Force Path Cut, we recast it as a Weighed Set Cover problem, which allows us to use well-established approximation algorithms to minimize the total edge removal cost. We propose the \texttt{PATHATTACK} algorithm, which combines these algorithms with a constraint generation method to efficiently identify paths to target for removal. While these algorithms only guarantee an approximately optimal solution in general, we find that our algorithm yields the optimal-cost solution in a large majority of our experiments.

The main contributions of the paper are as follows: (1) We formally define Force Path Cut and show that it is NP complete. (2) We demonstrate that approximation algorithms for Weighted Set Cover can be leveraged to solve the Force Path Cut problem. (3) We identify an oracle to judiciously select paths to consider for removal, avoiding the combinatorial explosion inherent in na\"{i}vely enumerating all paths. (4)  We propose the \texttt{PATHATTACK} algorithm, which integrates these elements into an attack strategy. (5)  We summarize the results of over 20,000 experiments on synthetic and real networks, in which \texttt{PATHATTACK} identifies the optimal attack in over 98\% of the time.
 
\section{Problem Statement}
\label{sec:model}
We are given a graph $G=(V, E)$, where the vertex set $V$ is a set of $N$ entities and $E$ is a set of $M$ undirected edges representing the ability to move between the entities. In addition, we have nonnegative edge weights $w:E\rightarrow\reals_{\geq0}$ denoting the expense of traversing  edges.  We take the length of a path in $G$ to be the sum of the weights of its edges; thus a shortest path will have minimum total weight.

We are also given two nodes $s,t\in V$. An adversary has the goal of routing traffic from $s$ to $t$ along a given path $p^*$. This adversary removes edges with full knowledge of \(G\) and \(w\), and each edge has a  cost $c:E\rightarrow\reals_{\geq0}$ of being removed. Given a budget $b$, the adversary's objective is to remove a set of edges $E^\prime\subset E$ such that $\sum_{e\in E^\prime}{c(e)}\leq b$ and $p^*$ is the exclusive shortest path from $s$ to $t$ in the resulting graph $G^\prime=(V, E\setminus E^\prime)$.
We refer to this problem as \emph{Force Path Cut}.

We show that this problem is computationally intractable in general by reducing from the 3-Terminal Cut problem, which is known to be NP-complete~\cite{Dahlhaus1994}. In 3-Terminal Cut, we are given a graph $G=(V, E)$ with weights $w$, a budget $b\geq 0$, and three terminal nodes $s_1,s_2,s_3\in V$, and are asked whether a set of edges can be removed such that (1) the sum of the weights of the removed edges is at most $b$ and (2) $s_1$, $s_2$, and $s_3$ are disconnected in the resulting graph (i.e., there is no path connecting any two terminals). Given that 3-Terminal Cut is NP-complete, we prove the following theorem.
\begin{theorem}
Force Path Cut is NP-complete for undirected graphs.\label{thm:FPC_NPC}
\end{theorem}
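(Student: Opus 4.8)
The plan is to prove both directions of NP-completeness. First I would establish membership in NP, which is the easy part: given a candidate edge removal set $E'$, one can verify in polynomial time that $\sum_{e\in E'}c(e)\leq b$ and that $p^*$ is the unique shortest path from $s$ to $t$ in $G'=(V,E\setminus E')$ by running a shortest-path computation and checking both that the length of $p^*$ equals the shortest-path distance and that no other path achieves that distance. The substance of the theorem is NP-hardness, which the excerpt tells us should come from a reduction from 3-Terminal Cut.

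For the reduction, I would start from an arbitrary 3-Terminal Cut instance with graph $G=(V,E)$, weights $w$, budget $b$, and terminals $s_1,s_2,s_3$, and build a Force Path Cut instance in polynomial time. The guiding intuition is that disconnecting the three terminals should correspond exactly to forcing a particular path to become the unique shortest path. The natural construction is to designate $s=s_1$ as the source and $t=s_2$ as the target, attach a new ``preferred'' path $p^*$ from $s$ to $t$ consisting of added edges routed so that it does not interfere with the original graph (for instance through fresh vertices), and assign the new edges of $p^*$ a removal cost high enough (or structurally protected) that the adversary will never cut them, while giving the original edges of $E$ removal costs equal to their weights $w$ and setting the Force Path Cut budget equal to $b$. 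The role of the third terminal $s_3$ is the delicate point: I would connect $s_3$ into the gadget so that any path from $s$ to $t$ shorter than or tied with $p^*$ must route through a connection among the terminals, forcing the adversary's edge cuts to sever all terminal-to-terminal connections in exactly the way 3-Terminal Cut requires.

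The key correctness argument then splits into two implications. In one direction, a valid 3-terminal cut of weight at most $b$ gives a removal set that, by construction, destroys every competing $s$--$t$ path of length at most that of $p^*$, leaving $p^*$ as the unique shortest path within budget $b$. In the other direction, any budget-$b$ solution to the Force Path Cut instance that makes $p^*$ the exclusive shortest path must, because of how $p^*$'s length and the terminal connections were calibrated, remove edges that disconnect $s_1$, $s_2$, and $s_3$ from one another, yielding a valid 3-terminal cut of weight at most $b$. I would set up the path lengths and costs precisely so these two implications are tight.

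I expect the main obstacle to be the gadget design that simultaneously encodes all \emph{three} terminals, since Force Path Cut is inherently a statement about a single source--target pair while 3-Terminal Cut concerns pairwise disconnection among three nodes. Making the preferred path $p^*$ and its calibrated length force the adversary to cut precisely the edges that separate all three terminals—rather than merely separating two of them or cutting a cheaper unrelated set—is the crux; the reduction must ensure that leaving any terminal pair connected creates a competing shortest (or equal-length) path that violates the exclusivity of $p^*$. Getting these length and cost values to line up so that the budget $b$ transfers exactly between the two problems is where the real care is needed.
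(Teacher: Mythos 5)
Your overall strategy---NP membership by polynomial-time verification, plus NP-hardness via a reduction from 3-Terminal Cut in which the original edges keep removal cost equal to weight, the budget transfers, and a designated path $p^*$ is protected from cutting---matches the paper's. But the proof has a genuine gap exactly where you flag the difficulty: you never specify the gadget that makes a single source--target shortest-path condition enforce disconnection of all \emph{three} terminal pairs. Saying you ``would connect $s_3$ into the gadget so that any path from $s$ to $t$ shorter than or tied with $p^*$ must route through a connection among the terminals'' states the desired property, not a construction, and that property is the entire content of the hardness proof. In particular, routing $p^*$ through fresh vertices disjoint from $G$ gives no obvious mechanism by which a surviving connection between, say, $s_1$ and $s_3$ or $s_2$ and $s_3$ would manifest as a too-short competing path between your chosen source and target.

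The paper resolves this with a specific weight calibration. Let $w_{\mathrm{all}}=\sum_{e\in E}w(e)$, delete any pre-existing terminal--terminal edges (charging them against the budget), and add three edges whose removal cost equals their weight: $e_{12}=\{s_1,s_2\}$ and $e_{23}=\{s_2,s_3\}$ each of weight $w_{\mathrm{all}}+2\epsilon$, and $e_{13}=\{s_1,s_3\}$ of weight $2w_{\mathrm{all}}+3\epsilon$, with $s=s_1$, $t=s_3$, and $p^*=(s_1,s_3)$. All three added edges cost more than $w_{\mathrm{all}}\geq b$ to remove, so they survive any feasible attack. Then every surviving original path between \emph{any} terminal pair produces an $s_1$--$s_3$ route shorter than $p^*$: a surviving $s_1$--$s_3$ path has length at most $w_{\mathrm{all}}$; a surviving $s_1$--$s_2$ path followed by $e_{23}$ (or a surviving $s_2$--$s_3$ path preceded by $e_{12}$) has length at most $2w_{\mathrm{all}}+2\epsilon<2w_{\mathrm{all}}+3\epsilon$; while the purely added detour $e_{12},e_{23}$ has length $2w_{\mathrm{all}}+4\epsilon$ and therefore does not compete with $p^*$. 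This asymmetric assignment is what makes both implications of the equivalence tight, and it is precisely the piece your sketch leaves unresolved; without it or an equivalent device, the reduction does not go through.
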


Here we provide an intuitive sketch of the proof; the formal proof is included in the supplementary material. 
\begin{proof*}{Proof Sketch}
Suppose we want to solve 3-Terminal Cut for a graph $G=(V, E)$ with weights $w$, where the goal is to find $E^\prime\subset E$ such that the terminals are disconnected in $G^\prime=(V, E\setminus E^\prime)$ and $\sum_{e\in E^\prime}w(e)\leq b$. We first consider the terminal nodes: If any pair of terminals shares an edge, that edge must be included in $E^\prime$ regardless of its weight; the terminals would not be disconnected if this edge remains. Note also that for 3-Terminal Cut, edge weights are edge removal costs; there is no consideration of weights as distances. If we add new edges between the terminals that are costly to both traverse and remove, then forcing one of these new edges to be the shortest path requires removing any other paths between the terminal nodes. This causes the nodes to be disconnected in the original graph. We will use a large weight for this purpose: $\wall=\sum_{e\in E}w(e)$, the sum of all weights in the original graph.

We reduce 3-Terminal Cut to Force Path Cut as follows. Create a new graph $\hat{G}=(V, \hat{E})$, where $\hat{E} = E\cup\{\{s_1, s_2\}, \{s_1, s_3\}, \{s_2, s_3\}\}$,
i.e., $\hat{G}$ is the input graph with edges between the terminals added if they did not already exist. In addition, create new weights $\hat{w}$ where, for some $\epsilon>0$,
$\hat{w}(\{s_1, s_2\}) =  \hat{w}(\{s_2, s_3\}) = \wall+2\epsilon$ and $\hat{w}(\{s_1, s_3\}) =2\wall+3\epsilon$,
and $\hat{w}(e)=w(e)$ for all other edges. Let the edge removal costs in the new graph be equal to the weights, i.e., $\hat{c}(e)=\hat{w}(e)$ for all $e\in \hat{E}$. Finally, let the target path consist only of the edge from $s_1$ to $s_3$, i.e., $s=s_1$, $t=s_3$, and $p^*=(s, t)$.

If we could solve Force Path Cut on $\hat{G}$ with weights $\hat{w}$ and costs $\hat{c}$, it would yield a solution to 3-Terminal Cut. We can assume the budget $b$ is at most $\wall$, since this would allow the trivial solution of removing all edges and any additional budget would be unnecessary. If any edges exist between terminals in the original graph $G$, they must be included in the set of edges to remove, and their weights must be removed from the budget, yielding a new budget $\hat{b}$. Using this new budget for Force Path Cut, we will find a solution $\hat{E}^\prime\subset\hat{E}$ if and only if there is a solution $E^\prime\subset E$ for 3-Terminal Cut. A brief explanation of the reasoning is as follows:
\begin{itemize}
    \item When we solve Force Path Cut, we are forcing an edge with a very large weight to be on the shortest path. If any path from $s_1$ to $s_3$ from the original graph remained, it would be shorter than $(s_1, s_3)$. In addition, if any path from $G$ between $s_1$ and $s_2$ remained, its length would be at most $\wall$, and thus a path from $s_1$ to $s_3$ that included $s_2$ would have length at most $2\wall+2\epsilon$. This would mean $(s_1, s_3)$ is not the shortest path between $s_1$ and $s_3$. A similar argument holds for paths between $s_2$ and $s_3$. Thus, no paths can remain between the terminals if we find a solution for Force Path Cut.
    \item If a solution exists for 3-Terminal Cut in $G$, it will yield the solution for Force Path Cut in $\hat{G}$. Any edge added to the graph to create $\hat{G}$ would be more costly to remove than removing all edges from the original $G$, so none will be removed. With all original paths between terminals removed, the only ones remaining from $s_1$ to $s_3$ are $(s_1, s_3)$ and $(s_1, s_2, s_3)$, the former of which is shortest, thus yielding a solution to Force Path Cut.
\end{itemize}
This procedure is illustrated in Figure~\ref{fig:reduction}. We start with the original graph, which may include edges between the terminals. We add high-weight, high-cost edges between the terminals; edges that have long distances but are effectively impossible to remove within the budget. Forcing the shortest path from $s_1$ to $s_3$ to consist of only the direct edge, therefore, requires any previously existing paths between terminals to be removed, resulting in 3 subgraphs, each of which is connected to the others by a single edge. The edges removed by Force Path Cut plus any edges between the terminals in the original graph (dashed lines in the figure) comprise the edges to be removed in the solution to 3-Terminal Cut. 
\begin{figure}[t]
\includegraphics[width=1.5in]{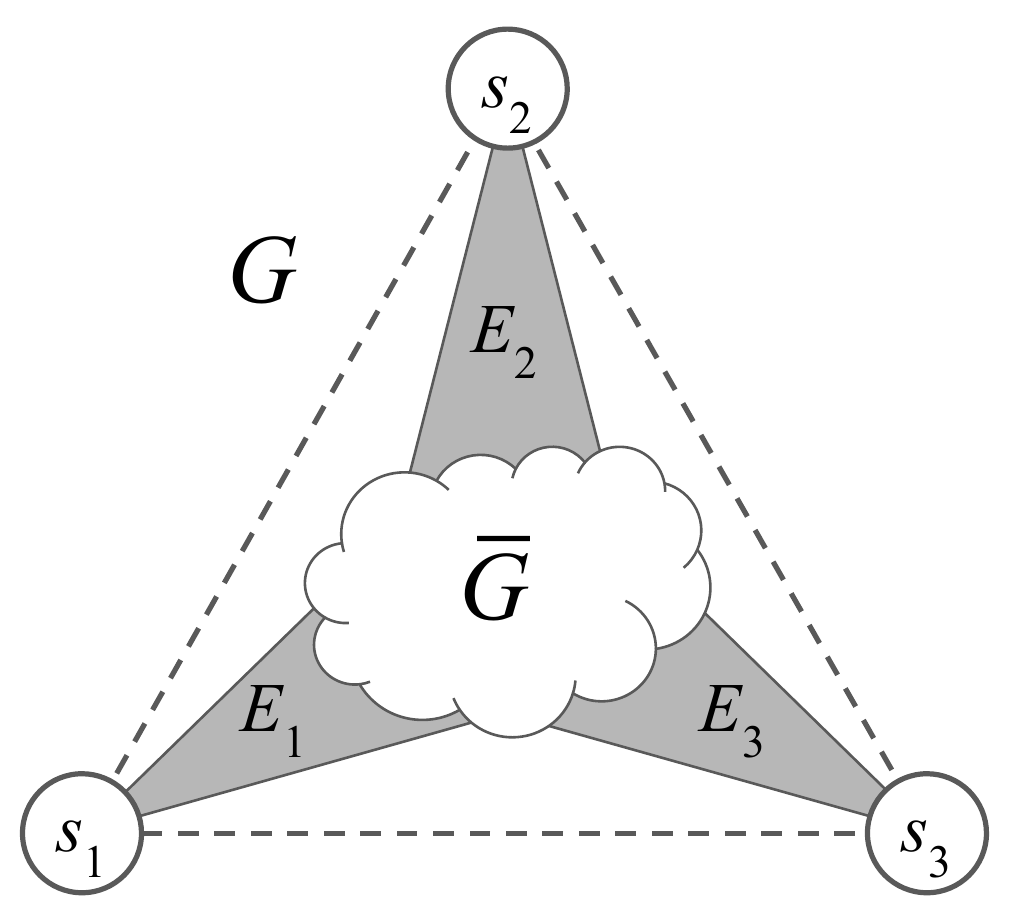}\ \ 
\includegraphics[width=1.5in]{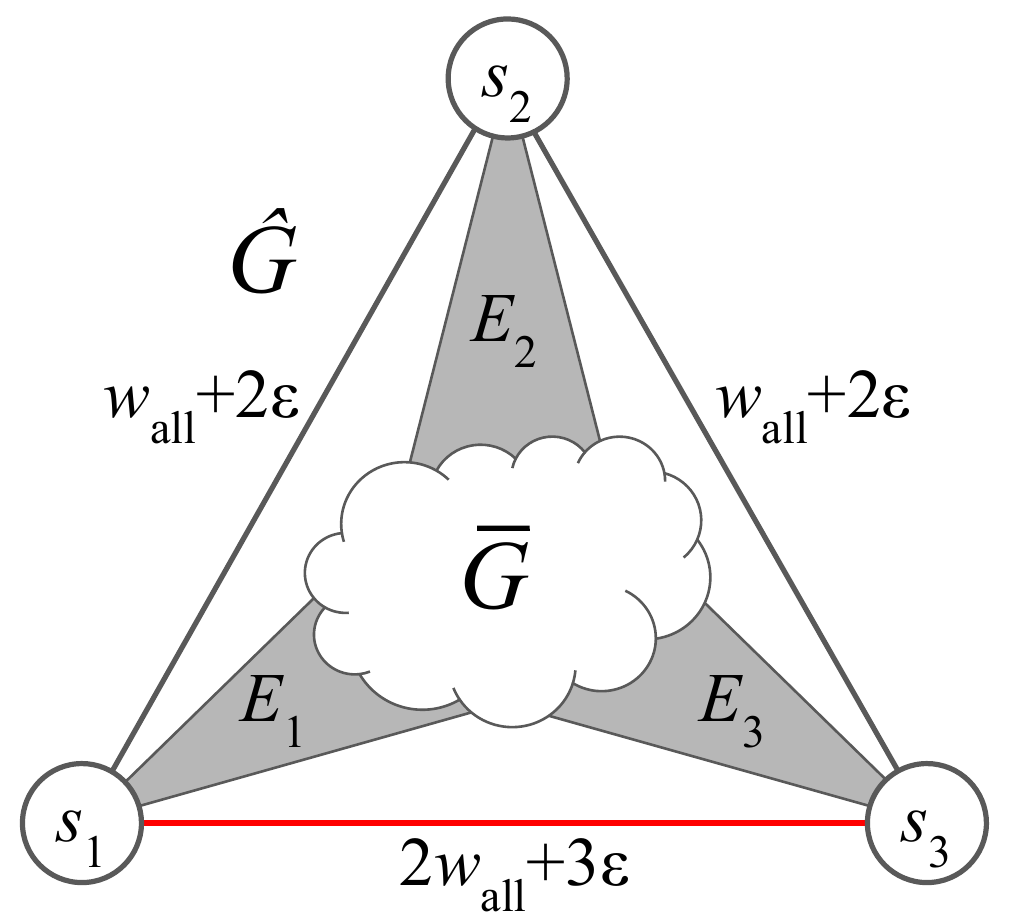}\ \ 
\includegraphics[width=1.5in]{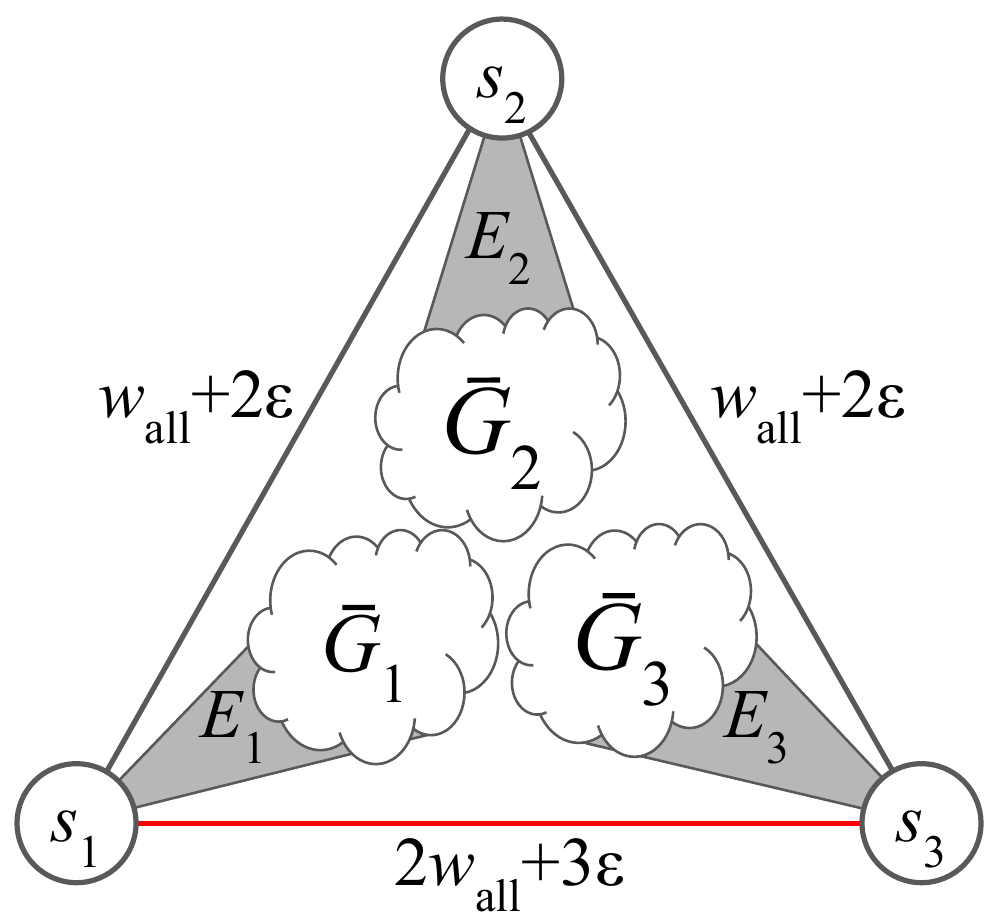}
\caption{Conversion from input to 3-Terminal Cut to Force Path Cut. The initial graph (left) includes 3 terminal nodes $s_1$, $s_2$, and $s_3$, which are connected to the rest of the graph by edges $E_1$, $E_2$, and $E_3$, respectively. The dashed lines indicate the possibility of edges between terminals. The input to Force Path Cut, $\hat{G}$ (center), includes the original graph plus high-weight, high-cost edges between terminals. A single edge comprising $p^*$ is indicated in red. The result of Force Path Cut (right) is that any existing paths between the terminals have been removed, thus disconnecting them in the original graph and solving 3-Terminal Cut.}
\label{fig:reduction}
\end{figure}
\qed
\end{proof*}

\section{Proposed Method: PATHATTACK}
\label{sec:methods}

While finding an optimal solution to Force Path Cut is computationally intractable, we formulate the problem in a way that enables a practical solution that is within a logarithmic factor of optimal.

\subsection{Path Cutting as Set Cover}

The success condition of Force Path Cut is that all paths from $s$ to $t$ aside from $p^*$ must be strictly longer than $p^*$.  This is an example of the (Weighted) Set Cover problem. In Weighted Set Cover, we are given a discrete universe $\mathcal{U}$ and a set of subsets of the universe $\mathcal{S}$, $S\subset\mathcal{U}$ for all $S\in\mathcal{S}$, where each set has a cost $c(S)$. The goal is to choose those subsets whose aggregate cost is minimum yet whose union equals the universe. In Force Path Cut, the elements of the universe to cover are the paths and the sets represent edges: each edge corresponds to a set containing all paths from $s$ to $t$ on which it lies. Including this set in the cover implies removing the edge, thus covering the elements (i.e., cutting the paths). Figure~\ref{fig:translation} shows how Force Path Cut is an example of Weighted Set Cover.

\begin{figure}
    \centering
    \includegraphics[width=\textwidth]{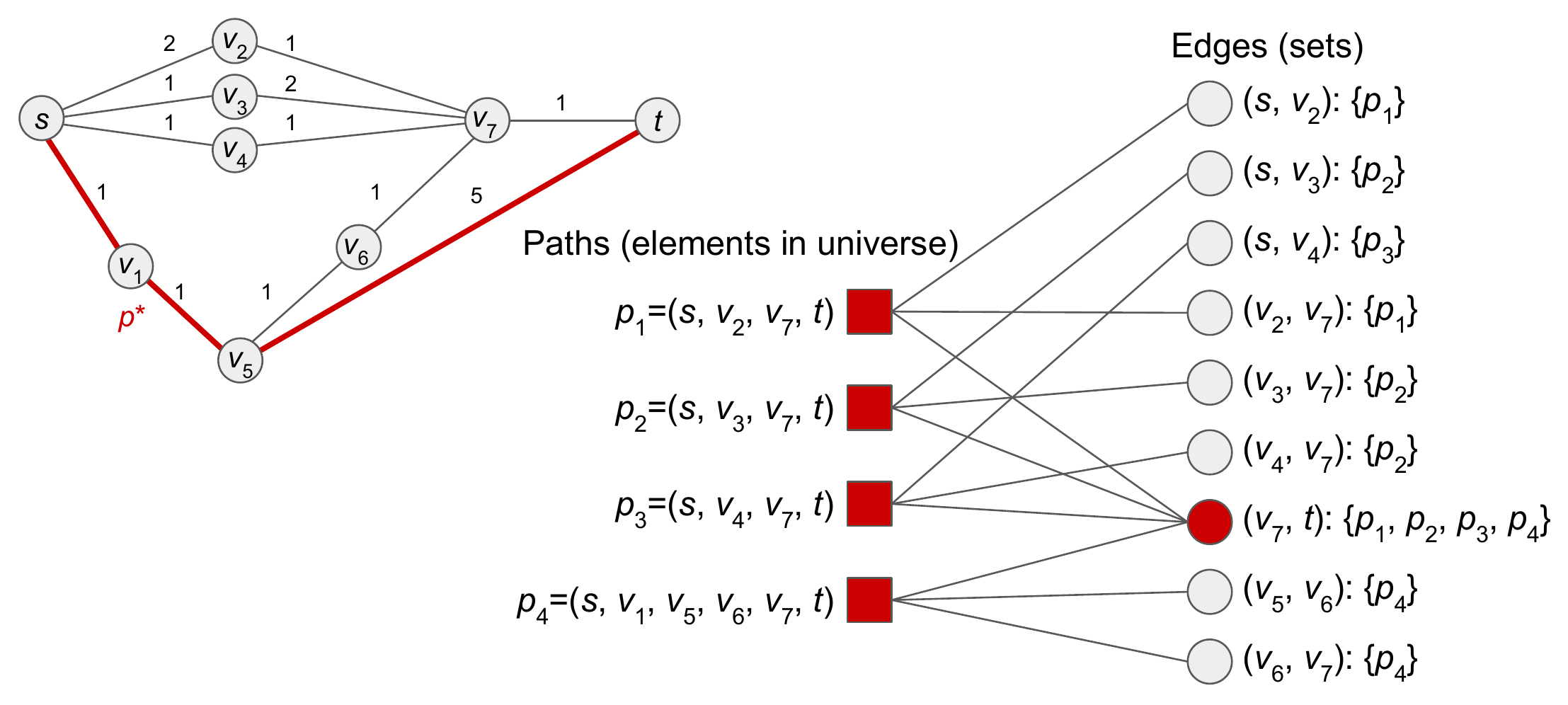}
    \caption{The Force Path Cut problem is an example of the Weighted Set Cover problem. In the bipartite graph on the right, the square nodes represent paths and the circle nodes represent edges. Note that edges along $p^*$ are not included. When the red-colored circle (i.e., edge $(v_7,t)$) is removed, then the red-colored squares (i.e., paths $p_1$, $p_2$, $p_3$, and $p_4$) are removed.}
    \label{fig:translation}
\end{figure}

While Set Cover is NP-complete, there are known approximation algorithms to get a solution within a factor of $O(\log|\mathcal{U}|)$ of the optimal cost. The challenge in our case is that the universe may be extremely large. We address this challenge over the remainder of this section.

\subsection{Linear Programming Formulation}
\label{subsec:LP}

By considering vectors of edges, we formulate an integer program to solve Force Path Cut. Let $\cVec\in\reals_{\geq0}^M$ be a vector of edge costs, where each entry in the vector corresponds to an edge in the graph. We want to minimize the sum of the costs of edges that are cut, which is the dot product of $\cVec$ with a binary vector indicating which edges are cut, denoted by $\Delta\in\{0, 1\}^M$. This means that we optimize over values of $\Delta$ under constraints that (1) $p^*$ is not cut and (2) all other paths from $s$ to $t$ not longer than $p^*$ \emph{are} cut. We represent paths in this formulation by binary indicator vectors---i.e., the vector $\xVec_p\in\{0, 1\}^M$ that represents path $p$ is 1 at entries corresponding to edges in $p$ and 0 elsewhere. Note that this restricts us to considering simple paths---those without cycles---which is sufficient for our purpose. As long as there is one index that is one in both $\Delta$ and $\xVec_p$, the path $p$ is cut. Let $P_p$ be the set of all paths in $G$ from $p$'s source to its destination that are no longer than $p$. The integer linear program formulation of Force Path Cut is as follows: 

\begin{align}
  \hat{\Delta} = &\arg\min_{\Delta} \cVec^\top\Delta\label{eq:minCost}\\
  \text{s.t.} &~\Delta\in\{0, 1\}^{M}\label{eq:binaryCut}\\
  &~\xVec_p^\top\Delta\geq 1~\forall p\in P_{p^*}\setminus \{p^*\}\label{eq:tooShort}\\
  &~\xVec_{p^*}^\top\Delta=0.\label{eq:pStarNotCut}
\end{align}
Here constraint (\ref{eq:tooShort}) ensures that any path shorter than (and thus competing with) $p^*$ will be cut, and constraint (\ref{eq:pStarNotCut}) forbids cutting $p^*$. As mentioned previously, $P_{p^*}$ may be extremely large, which we address in Section~\ref{subsec:constraintGen}. 

The formulation (\ref{eq:minCost})--(\ref{eq:pStarNotCut}) is analogous to the formulation of Set Cover as an integer program~\cite{Vazirani2003}. The goal is to minimize the cost of covering the universe -- i.e., for each element $x\in \mathcal{U}$, at least one set $S\in\mathcal{S}$ where $x\in S$ is included. Letting $\delta_S$ be a binary indicator of the inclusion of subset $S$, the integer program formulation of Set Cover is
\begin{align}
  \hat{\hat{\delta}} = &\arg\min_{\delta} \sum_{S\in\mathcal{S}}{c(S)\delta_S}\label{eq:minSetCost}\\
  \text{s.t.} &~\delta_S\in\{0, 1\}~\forall S\in\mathcal{S}\label{eq:binaryInclusion}\\
  &~\sum_{S\in\{S^\prime\in\mathcal{S}|x\in S\}}{\delta_S}\geq 1~\forall x\in\mathcal{U}\label{eq:coverx}.
\end{align}
Equations (\ref{eq:minCost}), (\ref{eq:binaryCut}), and (\ref{eq:tooShort}) are analogous to (\ref{eq:minSetCost}), (\ref{eq:binaryInclusion}), and (\ref{eq:coverx}), respectively. The constraint (\ref{eq:pStarNotCut}) can be incorporated by not allowing some edges to be cut, which manifests itself as removing some subsets from $\mathcal{S}$.

We formulate Force Path Cut as Set Cover, which enables the use of approximation algorithms to solve the problem. We consider two approximation algorithms. The first method, \texttt{GreedyPathCover}, iteratively adds the most cost-effective subset: that with the highest number of uncovered elements per cost. In Force Path Cut, this is equivalent to iteratively cutting the edge that removes the most paths per cost. The pseudocode is shown in Algorithm~\ref{alg:greedySC}. In this algorithm, we have a fixed set of paths $P$, which is not necessarily \emph{all} paths that must be cut. Note that this algorithm only uses costs, not weights: the paths of interest have already been determined and we only need to determine the cost of breaking them. The \texttt{GreedyPathCover} algorithm performs a constant amount of work at each edge in each path. We use a heap to find the most cost effective edge in constant time. We also use lazy initialization to avoid initializing entries in the tables associated with edges that do not appear in any paths. This means that the entire algorithm runs in time that is linear in the sum of the number of edges across paths. Since a path may visit all nodes in the graph, in the worst case, this leads to a running time of $O(|P|N)$. This algorithm is known to have a worst-case approximation factor of the harmonic function of the size of the universe~\cite{Vazirani2003}. That is,
\begin{equation}
    H_{|\mathcal{U}|}=\sum_{n=1}^{|\mathcal{U}|}{\frac{1}{n}},
\end{equation}
which implies that the \texttt{GreedyPathCover} algorithm has a worst-case approximation factor of $H_{|P_{p^*}|-1}$. We, however, achieve a much tighter bound, which is discussed in Section~\ref{subsec:pathattack}.
\begin{algorithm}
\begin{algorithmic}
\STATE \textbf{Input:} Graph $G=(V, E)$, costs $c$, target path $p^*$, path set $P$
\STATE \textbf{Output:} Set $E^\prime$ of edges to cut
\STATE $T_P\gets$ empty hash table ~ ~ $\langle\langle$set of paths for each edge$\rangle\rangle$
\STATE $T_E\gets$ empty hash table ~ ~ $\langle\langle$set of edges for each path$\rangle\rangle$
\STATE $N_P\gets$ empty hash table ~ ~ $\langle\langle$path count for each edge$\rangle\rangle$
\FORALL{$e \in E$}
\STATE $T_P[e]\gets\emptyset$
\STATE $N_P[e]\gets0$
\ENDFOR
\FORALL{$p\in P$}
\STATE $T_E[p]\gets\emptyset$
\FORALL{edges $e$ in $p$ and not $p^*$}
\STATE $T_P[e]\gets T_P[e]\cup \{p\}$
\STATE $T_E[p]\gets T_E[p]\cup \{e\}$
\STATE $N_P[e]\gets N_P[e]+1$
\ENDFOR
\ENDFOR
\STATE $E^\prime\gets\emptyset$
\WHILE{$\max_{e\in E}{N_P[e]}>0$}
\STATE $e^\prime\gets\arg\max_{e\in E}{N_P[e]/c(e)}$ ~ ~ $\langle\langle$find most cost-effective edge$\rangle\rangle$
\STATE $E^\prime\gets E^\prime\cup\{e^\prime\}$
\FORALL{$p \in T_P[e^\prime]$} 
\FORALL{$e_1\in T_E[p]$}
\STATE $N_P[e_1] \gets N_P[e_1] - 1$ ~ ~ $\langle\langle$decrement path count$\rangle\rangle$
\STATE $T_P[e_1] \gets T_P[e_1] \setminus {p}$ ~ ~ $\langle\langle$remove path$\rangle\rangle$
\ENDFOR
\STATE $T_E[p]\gets\emptyset$ ~ ~ $\langle\langle$clear edges$\rangle\rangle$
\ENDFOR
\ENDWHILE
\RETURN $E^\prime$
\end{algorithmic}
\caption{GreedyPathCover}
\label{alg:greedySC}
\end{algorithm}

The second approximation algorithm we consider involves relaxing the integer constraint into the reals and rounding the resulting solution. We refer to this algorithm as \texttt{LP-PathCover}. In this case, we replace condition (\ref{eq:binaryCut}) with
\begin{equation}
    \Delta\in[0, 1]^{M}
\end{equation}
and get a $\hat{\Delta}$ that may contain non-integer entries. Following the procedure in~\cite{Vazirani2003}, we apply randomized rounding as follows for each edge $e$:
\begin{enumerate}
    \item Treat the corresponding entry $\hat{\Delta}_e$ as a probability.
    \item Draw $\lceil\ln{4|P|}\rceil$ independent Bernoulli random variables with probability $\hat{\Delta}_e$.\label{it:Bernoulli}
    \item Cut $e$ if and only if at least one random variable from step~\ref{it:Bernoulli} is 1.
\end{enumerate}
If the result either does not cut all paths or is too large---i.e., greater than $4\ln{4|P|}$ times the fractional (relaxed) cost---we try again until it works. These conditions are both satisfied with probability greater than 1/2, so the expected number of attempts to get a valid solution is less than 2. By construction, the approximation factor is $4\ln{4|P|}$ in the worst case. The running time is dominated by running the linear program; the remainder of the algorithm is (with high probability) linear in the number of edges and logarithmic in the number of constraints $|P|$. Algorithm~\ref{alg:LPCut} provides the pseudocode for \texttt{LP-PathCover}.
\begin{algorithm}
\begin{algorithmic}
\STATE \textbf{Input:} Graph $G=(V, E)$, costs $\cVec$, path $p^*$, path set $P$
\STATE \textbf{Output:} Binary vector $\Delta$ denoting edges to cut
\STATE $\hat{\Delta}\gets$ relaxed cut solution to (\ref{eq:minCost})--(\ref{eq:tooShort}) with paths $P$
\STATE $\Delta \gets \mathbf{0}$
\STATE $E^\prime\gets\emptyset$
\STATE not\_cut$\gets$\TRUE
\WHILE{$\cVec^\top\Delta > \cVec^\top\hat{\Delta}(4\ln{4|P|})$ \OR not\_cut}
\STATE$E^\prime\gets\emptyset$
\FOR{$i\gets 1$ to $\lceil\ln{4|P|}\rceil$}
\STATE $E_1\gets\{e\in E \textrm{ with probability }\hat{\Delta}_e\}$ ~ ~ $\langle\langle$randomly select edges based on $\hat{\Delta}\rangle\rangle$
\STATE $E^\prime\gets E^\prime\cup E_1$
\ENDFOR
\STATE $\Delta\gets$ indicator vector for $E^\prime$
\STATE not\_cut$\gets(\exists p\in P$ where $p$ has no edge in $E^\prime$)
\ENDWHILE
\RETURN $\Delta$
\end{algorithmic}
\caption{LP-PathCover}
\label{alg:LPCut}
\end{algorithm}

\subsection{Constraint Generation}
\label{subsec:constraintGen}

In general, it is intractable to include  every path from $s$ to $t$. Take the example of an $N$-vertex clique (a.k.a.~complete graph) in which all edges have weight $1$ except the edge from $s$ to $t$, which has weight $N$, and let $p^*=(s, t)$. Since all simple paths other than $p^*$ are shorter than $N$, all of those paths will be included as constraints in (\ref{eq:tooShort}), including $(N-2)!$ paths of length $N-1$. If we only explicitly include constraints corresponding to the two- and three-hop paths (a total of $(N-2)^2+(N-2)$ paths), then the optimal solution will be the same as if we had included all constraints: cut the $N-2$ edges around either $s$ or $t$ that do not directly link $s$ and $t$. Optimizing using only necessary constraints is the other technique we use to make an approximation of Force Path Cut tractable.

Constraint generation is a technique for automatically building a relatively small set of constraints when the total number is extremely large or infinite~\cite{Ben-Ameur2006,Letchford2013}. The method requires an oracle that, given a proposed solution, returns a constraint that is being violated. This constraint is then explicitly incorporated into the optimization, which is run again and a new solution is proposed. This procedure is repeated until the optimization returns a feasible point or determines there is no feasible region.

Given a proposed solution to Force Path Cut---obtained by either approximation algorithm from Section~\ref{subsec:LP}---we have an oracle to identify unsatisfied constraints in polynomial time. We find the shortest path $p$ in $G^\prime=(V, E\setminus E^\prime)$ aside from $p^*$. If $p$ is not longer than $p^*$, an additional constraint is required, it needs to be included in the set of paths $P$. We combine this constraint generation oracle with the approximation algorithms to create our proposed method \texttt{PATHATTACK}.

\subsection{PATHATTACK}
\label{subsec:pathattack}

Combining the above techniques, we propose the \texttt{PATHATTACK} algorithm, which enables flexible computation of attacks to manipulate shortest paths. Starting with an empty set of path constraints, \texttt{PATHATTACK} alternates between finding edges to cut and determining whether removal of these edges results in $p^*$ being the shortest path from $s$ to $t$. Algorithm~\ref{alg:PATHATTACK} provides \texttt{PATHATTACK}'s pseudocode. Depending on time or budget considerations, an adversary can vary the underlying approximation algorithm.
\begin{algorithm}
\begin{algorithmic}
\STATE \textbf{Input:} Graph $G=(V, E)$, cost function $c$, weights $w$, target path $p^*$, method flag $l$
\STATE \textbf{Output:} Set $E^\prime$ of edges to cut
\STATE $E^\prime\gets\emptyset$
\STATE $P\gets\emptyset$
\STATE $\cVec\gets$ vector from costs $c(e)$ for $e\in E$
\STATE $G^\prime\gets(V, E\setminus E^\prime)$
\STATE $s, t\gets$ source and destination nodes of $p^*$
\STATE $p\gets$ shortest path from $s$ to $t$ in $G^\prime$ (not including $p^*$)
\WHILE{$p$ is not longer than $p^*$}
\STATE $P\gets P\cup\{p\}$
\IF{$l$}
\STATE $\Delta\gets$ \texttt{LP-PathCover}$(G, \cVec, p^*, P)$
\STATE $E^\prime\gets$ edges from $\Delta$
\ELSE
\STATE $E^\prime\gets$ \texttt{GreedyPathCover}$(G, c, p^*, P)$
\ENDIF
\STATE $G^\prime\gets(V, E\setminus E^\prime)$
\STATE $p\gets$ shortest path from $s$ to $t$ in $G^\prime$ (not including $p^*$) using weights $w$
\ENDWHILE
\RETURN $E^\prime$
\end{algorithmic}
\caption{PATHATTACK}
\label{alg:PATHATTACK}
\end{algorithm}

We improve the upper bound on the size of the approximation to be better than the harmonic number of the size of the universe (number of paths). The proof of the bound for the greedy Set Cover method uses the fact that, at the $i$-th iteration, the selected set will be at most $\frac{1}{i}$ times the optimal solution. While the number of paths can be astronomically large, the number of sets is bounded by the number of edges. This yields the following proposition.
\begin{proposition}
The approximation factor of \texttt{PATHATTACK-Greedy} is at most $H_M$ times the optimal solution to Force Path Cut.
\end{proposition}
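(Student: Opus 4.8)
The plan is to refine the standard greedy Set Cover guarantee so that the harmonic factor is indexed by the number of edges $M$ (the number of available sets) rather than by $|P_{p^*}|$ (the number of elements in the universe). I would begin from the charging argument behind the $H_{|\mathcal{U}|}$ bound cited from~\cite{Vazirani2003}: whenever \texttt{GreedyPathCover} cuts an edge that removes $a$ previously uncovered paths at cost $c$, assign each of those paths a price $c/a$, so that the total returned cost equals the sum of all path prices. The classical bound then follows by ordering the paths in the order they become covered and showing that the price of the path covered $j$-th from last is at most $\mathrm{OPT}/j$, so that $\sum_j \mathrm{OPT}/j = H_{|\mathcal{U}|}\,\mathrm{OPT}$, where $\mathrm{OPT}$ denotes the optimal cost.

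Next I would re-index this sum by greedy iteration instead of by individual path, which is exactly the step captured by the claim that at the $i$-th iteration the selected set costs at most $\tfrac{1}{i}$ of the optimum. Two structural facts drive the improvement. First, \texttt{GreedyPathCover} selects each edge at most once, so the number of iterations $k$ satisfies $k\le M$. Second, the cost-effectiveness of the chosen edge is non-decreasing across iterations (removing the most cost-effective edge can only lower the number of still-uncovered paths on every remaining edge), and since each iteration removes at least one new path, at least $k-i+1$ paths remain uncovered before iteration $i$. I would combine these with the averaging observation that, at each step, the optimal cut covers all remaining paths, so greedy's chosen cost-effectiveness is at most $\mathrm{OPT}$ divided by the number of remaining paths. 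The aim is to show the per-iteration contribution is at most $\mathrm{OPT}/i$, which summed over the at most $M$ iterations yields $\sum_{i=1}^{M}\mathrm{OPT}/i = H_M\,\mathrm{OPT}$.

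Finally I would transfer the guarantee from the generated Set Cover instance to Force Path Cut itself. The constraint-generation loop of \texttt{PATHATTACK} exits only when no competing $s$--$t$ path is as short as $p^*$, so the returned edges form a feasible attack; and because the generated path set $P$ is a subset of all paths that any optimal attack must cut, the optimum of the generated instance is at most the Force Path Cut optimum. Composing the two inequalities propagates the $H_M$ factor to the true optimal cost.

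I expect the crux to be the per-iteration step. The naive charging distributes $\mathrm{OPT}/j$ across individual paths and telescopes to $H_{|\mathcal{U}|}$, whereas here I must argue that grouping prices by iteration and invoking the bound $|R_i|\ge k-i+1$ on remaining paths collapses the harmonic sum to one indexed by edges. The delicate point is that a single iteration may cover many paths at once, so bounding each iteration's \emph{total} cost---not merely its cost-effectiveness---by $\mathrm{OPT}/i$ is precisely where the argument must exploit that the number of distinct sets is only $M$; making this rigorous, rather than loosely substituting $M$ for $|\mathcal{U}|$, is the main obstacle.
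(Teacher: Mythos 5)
Your plan reconstructs exactly the argument the paper itself gestures at: re-index the harmonic sum by greedy iteration rather than by covered path, bound the $i$-th iteration's cost by $\mathrm{OPT}/i$, and sum over the at most $M$ iterations. You are right to single out the per-iteration bound as the crux, and the honest verdict is that this step is a genuine gap: it does not follow from the facts you list, and as a statement about weighted Set Cover it is false. What the averaging argument gives at iteration $i$ is a bound on the \emph{price per newly covered path}, namely $c(e_i)/a_i\leq \mathrm{OPT}/|R_i|$, where $a_i$ is the number of paths newly covered and $R_i$ is the set of still-uncovered paths; the iteration's \emph{total} cost is therefore only bounded by $\mathrm{OPT}\cdot a_i/|R_i|$, and the factor $a_i$ is exactly what your inequality $|R_i|\geq k-i+1$ cannot absorb. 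A two-set instance already breaks the claim: take a universe of three paths, one edge of cost $1$ lying on all three (this is the optimum), and one edge of cost $0.6$ lying on two of them. \texttt{GreedyPathCover} selects the cheap edge first (cost-effectiveness $2/0.6 > 3/1$) and then the expensive one, paying $1.6 > H_2\cdot\mathrm{OPT}=1.5$, and the first iteration alone costs $0.6>\mathrm{OPT}/2$. So neither ``the $i$-th selected set costs at most $\mathrm{OPT}/i$'' (under either direction of indexing) nor your non-decreasing cost-effectiveness observation can carry the argument; the classical guarantee that is actually available is $H_d$ with $d$ the largest number of constraint paths through a single edge, which is not obviously bounded by $M$.

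To be fair, the paper offers no more than the same one-sentence justification, so you have faithfully reproduced its intended route and correctly diagnosed where it is unsupported; your final step transferring the guarantee from the generated instance to Force Path Cut (feasibility at loop exit, and the generated instance's optimum being at most the true optimum) is sound but inherits the unproven core. Note one caveat about the counterexample's scope: in a graph realization, $M=|E|$ counts every edge, including those of $p^*$ and edges lying on no competing path, so $H_M$ for a concrete graph is larger than the harmonic number of the number of \emph{useful} sets, and the example above refutes the intermediate lemma rather than the proposition itself. But that only underscores that if the proposition is true, it must be for a reason genuinely different from re-indexing the standard charging scheme---e.g., an argument exploiting the structure of the generated path set $P$ or the constraint-generation loop---and neither your proposal nor the paper supplies one.
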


In addition, since we have $M$ variables in the linear program for \texttt{PATHATTACK-LP}, there will be at most $M$ active constraints at its optimal solution. Thus, even if we were to include more than $M$ constraints, we could arrive at the same conclusion using only $M$ of them.  Applying the results of \cite{Vazirani2003} yields:
\begin{proposition}
\texttt{PATHATTACK-LP} yields a worst-case $O(\log M)$ approximation to Force~Path Cut with high probability.
\end{proposition}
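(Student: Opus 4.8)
The plan is to combine the standard randomized-rounding guarantee for Weighted Set Cover with the observation that, although constraint generation may accumulate many path constraints, the linear program lives in $\reals^M$ and so its optimum is pinned down by at most $M$ of them. Concretely, I would first recall the guarantee already established for \texttt{LP-PathCover}: given an input path set $P$, the randomized rounding of the fractional optimum $\hat{\Delta}$ returns, with probability at least $1/2$ per attempt, a set of edges that cuts every path in $P$ and whose cost is at most $4\ln(4|P|)$ times the fractional optimum $\cVec^\top\hat{\Delta}$. Repeating until success makes the expected number of attempts less than two, so each invocation succeeds with high probability; this is exactly the argument of \cite{Vazirani2003} instantiated with the path--edge incidence vectors $\xVec_p$.

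Next I would relate the fractional optimum to the true optimum. Since the constraints (\ref{eq:tooShort}) for the generated set $P$ are a subset of those for $P_{p^*}\setminus\{p^*\}$, and since (\ref{eq:binaryCut}) is relaxed to $[0,1]^M$, the LP value $\cVec^\top\hat{\Delta}$ is a lower bound on the cost of any integral feasible solution, in particular on the optimal Force Path Cut cost. Thus any invocation of \texttt{LP-PathCover} returns a cover of $P$ of cost at most $4\ln(4|P|)$ times optimal.

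The crux is to replace $\ln|P|$ by $\ln M$. Here I would invoke basic LP theory: the feasible region $\{\Delta\in[0,1]^M:\xVec_p^\top\Delta\ge 1\ \forall p\in P\}$ is a polyhedron in $\reals^M$, so its optimum is attained at a vertex at which $M$ linearly independent constraints are tight. At most $M$ of these tight constraints are covering constraints of type (\ref{eq:tooShort}); call this active set $P'\subseteq P$, with $|P'|\le M$. The LP restricted to $P'$ has the identical optimal vertex $\hat{\Delta}$ and value, so inactive constraints may be discarded without affecting the fractional solution we round. Charging the rounding analysis to the active constraints then lets me take the union bound over at most $M$ paths, so $\lceil\ln 4M\rceil$ rounds suffice to cover them with failure probability at most $1/4$, giving a worst-case factor of $4\ln(4M)=O(\log M)$. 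Full feasibility for Force Path Cut is certified not by this union bound but by the outer constraint-generation loop of \texttt{PATHATTACK}, which runs a shortest-path oracle after each round and appends any remaining violated path; at termination the returned edge set is feasible and, by the bound above, within $O(\log M)$ of optimal with high probability.

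I expect the main obstacle to be exactly this last reconciliation: justifying that we may round the fractional solution while attributing the rounding analysis to only the $\le M$ active constraints, even though feasibility must ultimately hold for the entire (possibly factorial) collection $P_{p^*}$. The clean way to discharge it is to separate the two roles of the algorithm---the LP and rounding supply the cost guarantee relative to the $\le M$ constraints that determine $\hat{\Delta}$, while the constraint-generation oracle supplies feasibility---and to note that pruning inactive constraints leaves $\hat{\Delta}$ and its cost unchanged, so the factor never depends on how many paths were enumerated along the way. Care is also needed to aggregate the per-attempt $1/2$ success probability into an overall high-probability statement across all rounding calls made before termination.
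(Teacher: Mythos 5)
Your proposal is correct and follows essentially the same route as the paper: the paper's own justification is precisely that the LP has $M$ variables, hence at most $M$ active constraints at its optimum, so the randomized-rounding guarantee of \cite{Vazirani2003} applies with universe size $M$ in place of $|P|$. Your write-up is in fact considerably more careful than the paper's two-sentence argument---in particular in separating the cost guarantee (charged to the $\le M$ active constraints) from feasibility (certified by the constraint-generation oracle)---but it introduces no new idea beyond what the paper intends.
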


\section{Experiments}
\label{sec:setup}

This section presents the baseline methods, the networks used in experiments, the experimental setup, and results. 

\subsection{Baseline Methods}
We consider two simple greedy methods as baselines for assessing performance. Each of these algorithms iteratively computes the shortest path $p$ between $s$ and $t$; if $p$ is not longer than $p^*$, it uses some criterion to cut an edge from $p$. When we cut the edge with minimum cost, we refer to the algorithm as \texttt{GreedyCost}. We also consider a version where we cut the edge in $p$ with the largest ratio of eigenscore\footnote{The eigenscore of an edge is the product of the entries in the principal eigenvector of the adjacency matrix corresponding to the edge's vertices.} to cost, since edges with high eigenscores are known to be important in network  flow~\cite{tongCIKM2012}. This version of the algorithm is called \texttt{GreedyEigenscore}. In both cases, edges from $p^*$ are not allowed to be cut.

\subsection{Synthetic and Real Networks}
\label{sec:graphdata}
Our experiments are on synthetic and real networks. All networks are undirected.

For the synthetic networks, we run five different random graph models to generate 100 synthetic networks of each model. We pick parameters to yield networks with similar numbers of edges ($\approx 160$K). For the Erd\H{o}s--R\'{e}nyi (ER) networks, we use a probability of 0.00125 of forming an edge. For the Barab\'{a}si-Albert (BA) networks, each node enters the network with degree 10. For the Stochastic Kronecker graph (KR) networks, we set the density parameter  to $0.00125$. The Lattice (LAT) networks are on a $285\times 285$ lattice. The Complete (COMP) networks are a 565-node clique.  Table \ref{table:syn_net_prop}, in Appendix~\ref{sec:datasets}, provides a summary characterization of these networks.

For real networks, we have seven weighted and unweighted networks. The unweighted networks are Wikispeedia graph (WIKI), Oregon autonomous system network (AS), and Pennsylvania road network (PA-ROAD). The weighted networks are Central Chilean Power Grid (GRID), Lawrence Berkeley National Laboratory network data (LBL), the Northeast US Road Network (NEUS), and the DBLP coauthorship graph (DBLP).  The networks range from 444 edges on 347 nodes to over 8.3M edges on over 1.8M nodes, with average degree ranging from over 2.5 to over 46.5 nodes and number of triangles ranging from 40 to close to 27M. Appendix~\ref{sec:datasets} (in supplementary material) provides further details on all the real networks, including web locations to download them.

For the synthetic networks and the unweighted real networks, we try three different edge-weight initialization schemes: Poisson, uniformly distributed, or equal weights. For Poisson weights, each edge $e$ has an independently random weight $w_e=1+w_e^\prime$, where $w_e^\prime$ is drawn from a Poisson distribution with rate parameter 20. For uniform weights, each edge's weight is drawn from a discrete uniform distribution of integers from 1 to 41. This yields the same average weight as the Poisson weights. For equal weights, all edges have equal weights.

\subsection{Experimental Setup}
For each graph---considering graphs with different edge-weighting schemes as distinct---we run 100 experiments unless otherwise noted. For each graph, we select $s$ and $t$ uniformly at random among all nodes, with the exception of LAT, PA-ROAD, and NEUS, where we select $s$ uniformly at random and select $t$ at random among nodes 50 hops away from $s$\footnote{This alternative method of selecting the destination was used due to the computational expense of identifying successive shortest paths in large grid-like networks.}. Given $s$ and $t$, we identify the shortest simple paths and use the 100th, 200th, 400th, and 800th shortest as $p^*$ in four experiments. For the large grid-like networks (LAT, PA-ROAD, and NEUS), this procedure is run using only the 60-hop neighborhood of $s$. We focus on the case where the edge removal cost is equal to the weight (distance).

The experiments were run on a Linux cluster with 32 cores and 192 GB of memory per node. The linear program in \texttt{PATHATTACK-LP} was implemented using Gurobi 9.1.1, and shortest paths were computed using  \texttt{shortest\_simple\_paths} in NetworkX.\footnote{Gurobi is at \url{https://www.gurobi.com}. NetworkX is at \url{https://networkx.org}.}

\subsection{Results}
\label{sec:results}
Across over 20,000 experiments, \texttt{PATHATTACK-LP} finds the optimal solution in over 98\% of cases. We can tell the solution is optimal because the relaxed linear program results in integer values.

We treat the result of \texttt{GreedyCost} as our baseline cost and report the cost of other algorithms' solutions as a reduction from the baseline. With one exception\footnote{\texttt{GreedyEigenscore} only outperforms \texttt{GreedyCost} in COMP with uniform random weights.}, \texttt{GreedyCost} outperforms \texttt{GreedyEigenscore} in both running time and edge removal cost, so we omit the \texttt{GreedyEigenscore} results for clarity of presentation. We use the algorithms to minimize cost without explicitly considering a budget. In practice, the adversary would run one of the optimization algorithms, compare budget and cost, and decide whether the attack is possible given resource constraints. 
Figure~\ref{fig:plots_sim_gs} shows the results on synthetic networks,
Figure~\ref{fig:plots_ruw_gs} shows the results on real networks with synthetic edge weights, and
Figure~\ref{fig:plots_rw_gs} shows the results on real weighted networks. In these figures, the 800th shortest path is used as $p^*$; other results were similar and omitted for space.

\begin{figure}[t]
    \centering
    \includegraphics[width=\textwidth]{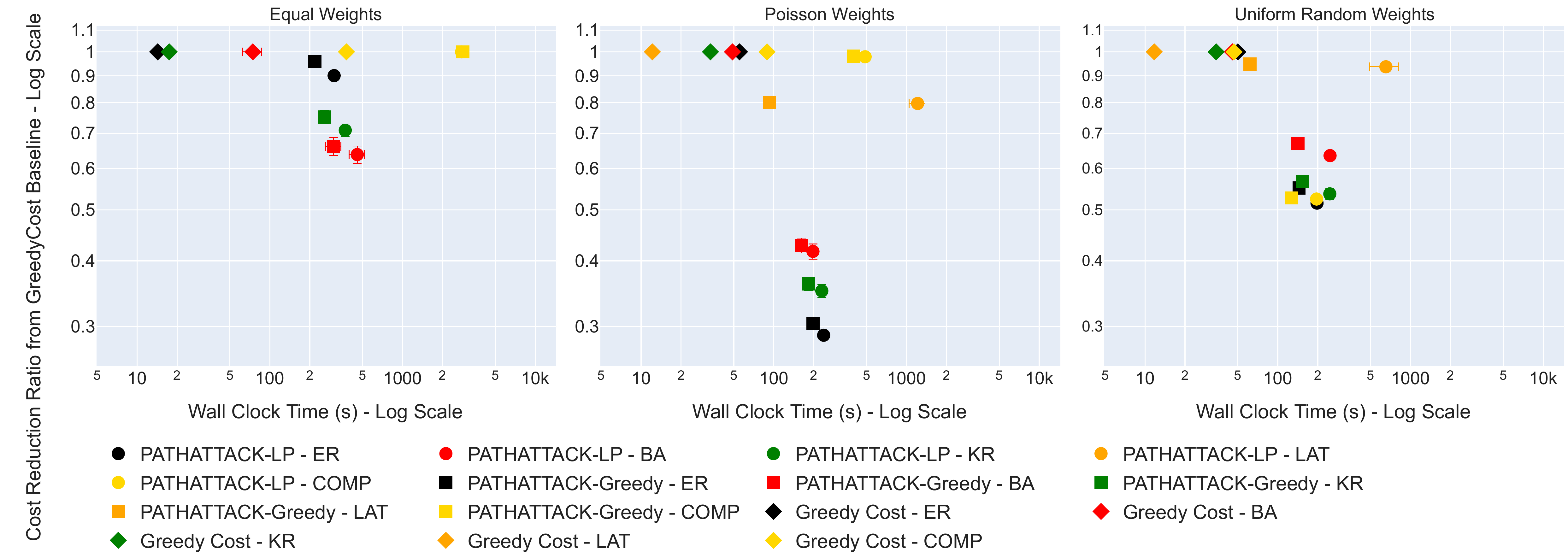}
    \caption{Results on synthetic networks. Shapes represent the different algorithms and colors represent the different networks. The horizontal axis represents wall clock time in seconds and the vertical axis represents the edge removal cost as a proportion of the cost required by the \texttt{GreedyCost} baseline. Lower cost reduction ratio and lower wall clock time is better. \texttt{PATHATTACK} yields a substantial cost reduction for weighted ER, BA, and KR graphs, while the baseline achieves nearly optimal performance for LAT.}
    \label{fig:plots_sim_gs}
\end{figure}

\begin{figure}[t]
    \centering
    \includegraphics[width=\textwidth]{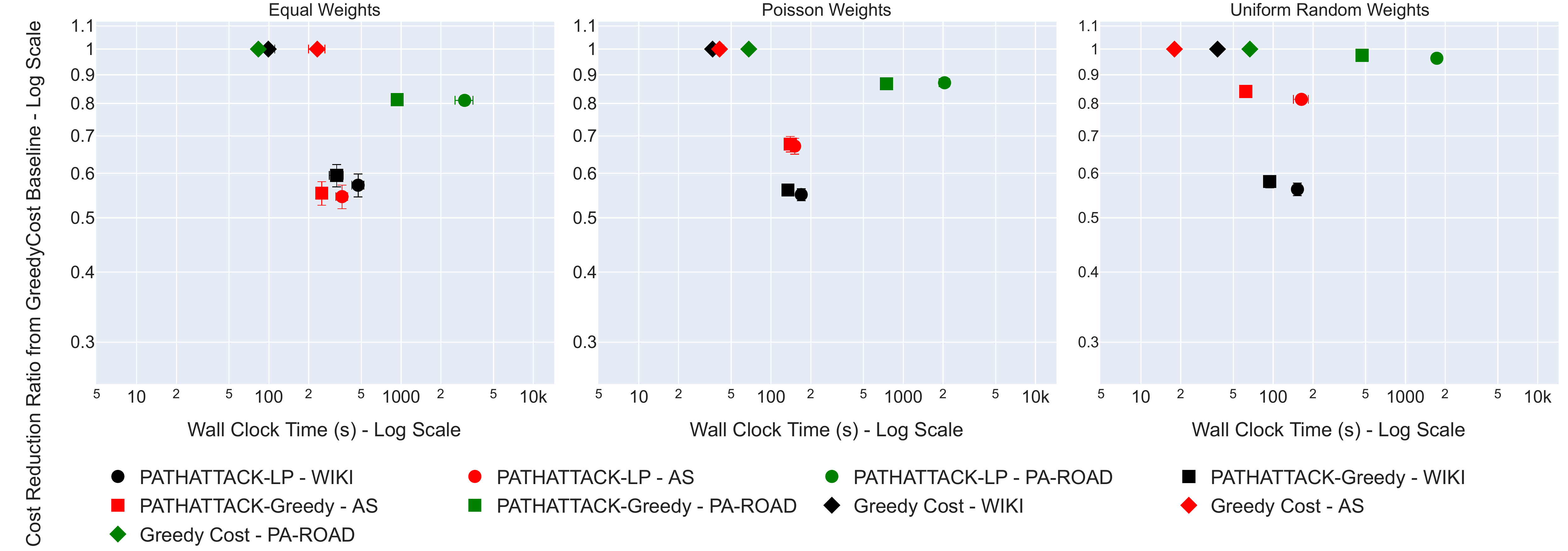}
    \caption{Results on unweighted real networks.
    Shapes represent the different algorithms and colors represent the different networks. The horizontal axis represents wall clock time in seconds and the vertical axis represents the edge removal cost as a proportion of the cost required by the \texttt{GreedyCost} baseline. Lower cost reduction ratio and lower wall clock time is better.
    As with the synthetic networks, \texttt{PATHATTACK} provides a significant cost reduction in networks other than those with grid-like topologies, where the baseline is nearly optimal.}
    \label{fig:plots_ruw_gs}
\end{figure}

\begin{figure}[t]
    \centering
    \includegraphics[width=0.6\textwidth]{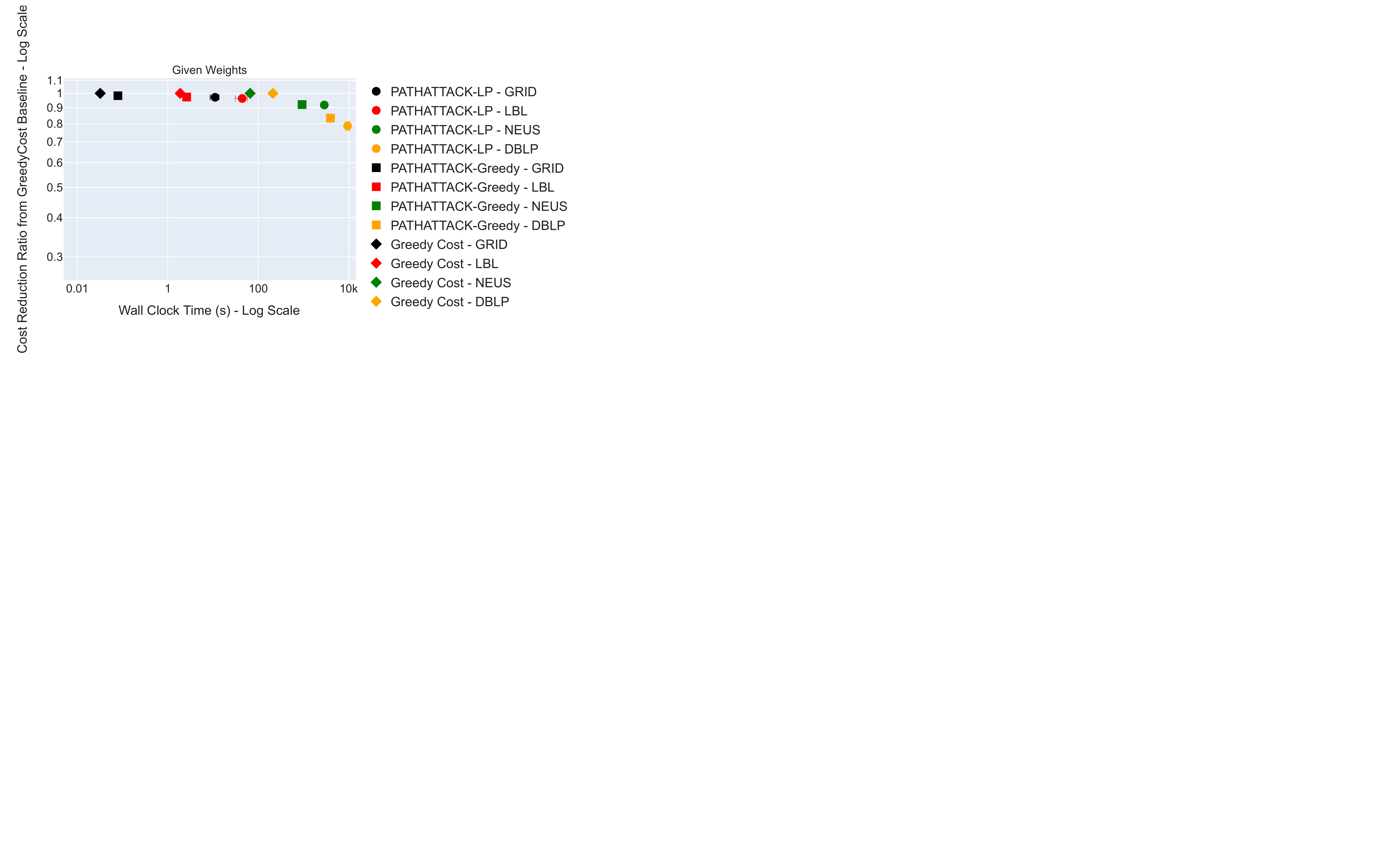}
    \caption{Results on weighted real networks.
    Shapes represent the different algorithms and colors represent the different networks. The horizontal axis represents wall clock time in seconds and the vertical axis represents the edge removal cost as a proportion of the cost required by the \texttt{GreedyCost} baseline. Lower cost reduction ratio and lower wall clock time is better.
    \texttt{PATHATTACK} reduces the cost of attacking the DBLP social network, while the other networks (those with very low clustering) achieve high performance with the baselines. Note: the range of the time axis is lower than that of the previous plots.}
    \label{fig:plots_rw_gs}
\end{figure}

Comparing the cost achieved by \texttt{PATHATTACK} to those obtained by the greedy baselines, we observe some interesting phenomena. Across the synthetic networks in Figure~\ref{fig:plots_sim_gs}, the real graphs with synthetic weights in Figure~\ref{fig:plots_ruw_gs}, and the graphs with real weights in Figure~\ref{fig:plots_ruw_gs}, lattices and road networks have a similar tradeoff: \texttt{PATHATTACK} provides a mild improvement in cost at the expense of an order of magnitude additional processing time. Considering that \texttt{PATHATTACK-LP} typically results in the optimal solution, this means that the baselines are achieving near-optimal cost with an extremely na\"{i}ve algorithm. On the other hand, ER, BA, and KR graphs follow a trend more similar to the AS and WIKI networks, particularly in the randomly weighted cases. Here, the cost is cut by a substantial fraction---enabling an adversary to attack with a smaller budget---for a similar or smaller increase in running time. This suggests that the time/cost tradeoff is much less favorable for less clustered, grid-like networks. 

Cliques (COMP, yellow in Figure~\ref{fig:plots_sim_gs}) are particularly interesting in this case, showing a phase transition as the entropy of the weights increases. When weights are equal for all edges, cliques behave like an extreme version of the road networks: an order of magnitude increase in run time with no decrease in cost. With Poisson weights, \texttt{PATHATTACK} yields a slight improvement in cost, whereas when uniform random weights are used, the clique behaves much more like an ER or BA graph. In the unweighted case, $p^*$ is a three-hop path, so all other two- and three-hop paths from $s$ to $t$ must be cut, which the baseline does efficiently. Adding Poisson weights creates some randomness, but most edges have a weight that is approximately 21, so it is still similar to the unweighted scenario. With uniform random weights, we get the potential for much different behavior (e.g., short paths with many edges) for which the greedy baseline's performance suffers.

There is an opposite, but milder, phenomenon with PA-ROAD and LAT: using higher-entropy weights \emph{narrows} the cost difference between the baseline and \texttt{PATHATTACK}. This may be due to the source and destination being many hops away, as distinct from the clique case. When the terminal nodes many hops apart, many of the shortest paths between them could go through a few very low-weight (thus low-cost) edges. A very low weight edge between two nodes would be very likely to occur on many of the shortest paths, and would be found in an early iteration of the greedy algorithm and removed, while considering more shortest paths at once would yield a similar result. 
We also note that, in the weighted graph data, LBL and GRID behave similarly to the road networks. Among the real datasets we use, these have a very low clustering coefficient (see Appendix~\ref{sec:datasets}). This lack of overlap in nodes' neighborhoods may lead to better relative performance with the baseline, since there may not be a great deal of overlap between candidate paths.

\section{Related Work}
\label{sec:related}

Early work on attacking networks focused on disconnecting them~\cite{Albert2000}. This work demonstrated that targeted removal of high-degree nodes was highly effective against networks with powerlaw degree distributions (such as BA networks), but far less so against random networks. This is due to the prevalence of hubs in networks with powerlaw degree distributions.  Other work has focused on disrupting shortest paths via edge removal, but in a narrower context than ours. Work on the most vital edge problem (e.g., \cite{Nardelli2003}) attempts to efficiently find the single edge whose removal most increases the shortest path between two nodes.  In contrast, we consider a devious adversary that wishes a certain path to become shortest.

There are several other adversarial contexts in which path-finding is highly relevant. Some work is focused on traversing hostile territory, such as surreptitiously planning the path of an unmanned aerial vehicle~\cite{Jun2003}. The complement of this is work on network interdiction, where the goal is to intercept an adversary who is attempting to traverse the graph while remaining hidden. This problem has been studied in a game theoretic context for many years~\cite{Washburn1995}, and has expanded into work on disrupting attacks, with the graph representing an attack plan~\cite{Letchford2013}. In this work, as in ours, oracles can be used to avoid enumerating an exponentially large number of possible strategies~\cite{Jain2011}.

Work on Stackelberg planning~\cite{Speicher2018} is also relevant, though somewhat distinct from our problem. This work adopts a leader-follower paradigm, where rather than forcing the follower to make a specific set of actions, the leader's goal is to make whatever action the follower takes as costly as possible. This could be placed in our context by having the leader (adversary) attempt to make the follower take the longest path possible between the source and the destination, though finding this path would be NP hard in general.

Another related area is the common use of heuristics, such as using Euclidean distances to approximate graph distances~\cite{Rayner2011}. Exploiting deviations in the heuristic  enables an adversary to manipulate automated plans. Fuzzy matching has been used to quickly solve large-scale problems~\cite{Qiao2011}.  Attacks and defenses in this context is an interesting area for inquiry. A problem similar to Stackelberg planning is the adversarial stochastic shortest path problem, where the goal is to maximize reward while traversing over a highly uncertain state space~\cite{Neu2012}. 

The past few years have seen a great deal of work on attacking machine learning methods where graphs are part of the input. Attacks against vertex classification~\cite{Zugner2018,Zugner2019b} and node embeddings~\cite{Bojchevski2019} consider attack vectors that allow manipulation of edges, node attributes, or both in order to affect the outcome of the learning method. In addition, attacks against community detection have been proposed in which a node can create new edges to alter its assignment into a group of nodes by a community detection algorithm~\cite{Kegelmeyer2018}. Our work complements these research efforts, expanding the space of adversarial graph analysis into another important graph mining task.

\section{Conclusions}
\label{sec:conclusion}

We introduce the Force Path Cut problem, in which an adversary's aim is to force a specified path between a source node and a target node to be the shortest by cutting edges within a required budget. Many real-world applications use shortest-path algorithms (e.g., routing problems in computer, power, road, or shipping networks). We show that an adversary can manipulate the network for his strategic advantage. While the Force Path Cut problem is NP-complete, we show how it can be translated into the (Weighted) Set Cover, thus enabling the use of established approximation algorithms to optimize cost within a logarithmic factor of the true optimum. With this insight, we propose the \texttt{PATHATTACK} algorithm, which uses a natural oracle to generate only those constraints that are needed to execute the approximation algorithms. Across various  synthetic and real networks, we find that the \texttt{PATHATTACK-LP} variant identifies the optimal solution in over 98\% of more than 20,000 randomized experiments that we ran. Another variant, \texttt{PATHATTACK-Greedy}, has very similar performance and typically runs significantly faster than \texttt{PATHATTACK-LP}, while a greedy baseline method is faster still with much higher cost.

% ---- Bibliography ----
\bibliographystyle{splncs04}
\bibliography{bibfile.bib}

\appendix
\section{Notation}
Table~\ref{tab:notation} lists the notation used in the paper.
\begin{table}[!ht]
    \centering
    \begin{tabular}{|c|l|}
         \hline
         Symbol& Meaning \\
         \hline
         $G$ & graph\\
         \hline
         $V$ & vertex set \\
         \hline
         $E$ & edge set\\
         \hline
         $N$ & number of vertices\\
         \hline
         $M$ & number of edges\\
         \hline
         $s$ & source vertex\\
         \hline
         $t$ & destination vertex\\
         \hline
         $p^*$ & adversary's desired path in $G$  \\
         \hline
         $w(e)$ & edge weight function $w:E\rightarrow\reals_{\geq0}$\\
         \hline
         $\reals_{\geq 0}$&set of nonnegative real numbers\\
         \hline
         $\mathcal{U}$ & universe for the set cover problem\\
         \hline
         $\mathcal{S}$ & set of subsets of universe $\mathcal{U}$ \\
         \hline
         $\delta_S$ & binary vector representing inclusion of set $S \in \mathcal{S}$ \\
         \hline
         $\wVec$ &edge weight vector\\
         \hline
         $c(e)$ & edge removal cost function $c:E\rightarrow\reals_{\geq0}$\\
         \hline
         $\cVec$ & vector of edge removal costs\\
         \hline
         $b$ & adversary's budget\\
         \hline
         $\Delta$ & binary vector representing edges cut \\
         \hline
         $\xVec$ & binary vector representing edges in a path \\
         \hline
         $P_{p^*}$ & set of paths from $s$ to $t$ no longer than $p^*$\\
         \hline
         $(\cdot)^\top$ & matrix or vector transpose\\
         \hline
         $\langle k\rangle$ & average degree in $G$\\
         \hline 
         $\sigma_k$ & standard deviation of node degrees in $G$ \\
         \hline 
         $\kappa$ & global clustering coefficient in $G$ \\
         \hline
         $\tau$ & transitivity in $G$ \\
         \hline
         $\triangle$ & number of triangles in $G$ \\
         \hline
         $\varphi$ & number of connected components in $G$ \\
         \hline
    \end{tabular}
    \caption{Notation used throughout the paper}
    \label{tab:notation}
\end{table}

\section{Problem Complexity}
If each edge $e$ has a cost of removal $c(e)$, the goal is to minimize the cost of edge removal while forcing the shortest path to be $p^*$. (Cost of removal is not necessarily equal to the weight.) We can show that the $k$-Terminal Cut problem reduces to this one. In the $k$-Terminal Cut problem---described in~\cite{Dahlhaus1994}---we are given a weighted graph $G=(V, E)$ where each edge has a positive weight $w(e)$ and we have $k$ terminal nodes from $V$ and a budget $b$. The goal is to find an edge subset $E^\prime\subset E$ where $\sum_{e\in E^\prime}{w(e)}\leq b$ whose removal disconnects all terminals from one another. This is shown in~\cite{Dahlhaus1994} to be NP-complete for $k>2$. We will show that the solution to Force Path Cut would solve 3-Terminal Cut, thus proving the following theorem.
\begin{theorem}
Force Path Cut is NP-complete for undirected graphs.
\end{theorem}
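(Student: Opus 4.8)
The plan is to prove NP-completeness in two parts: first verify membership in NP, which is routine, then establish NP-hardness by a polynomial-time reduction from 3-Terminal Cut (known to be NP-complete for three terminals). For membership, I would take the removed edge set $E'$ as the certificate. Given $E'$, one checks in polynomial time that $\sum_{e\in E'}c(e)\leq b$ and that $p^*$ is the \emph{strict} shortest $s$–$t$ path in $G'=(V,E\setminus E')$, e.g., by running Dijkstra from $s$ and confirming that the distance to $t$ equals the length of $p^*$ and that $p^*$ is the unique path achieving it.

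For the hardness reduction I would take a 3-Terminal Cut instance—graph $G=(V,E)$, weights $w$, terminals $s_1,s_2,s_3$, budget $b$—and build the Force Path Cut instance from the sketch: augment $G$ to $\hat{G}=(V,\hat{E})$ by inserting the three terminal edges, assign inflated weights $\hat{w}(\{s_1,s_2\})=\hat{w}(\{s_2,s_3\})=\wall+2\epsilon$ and $\hat{w}(\{s_1,s_3\})=2\wall+3\epsilon$ with $\wall=\sum_{e\in E}w(e)$, set costs equal to weights, and declare $s=s_1$, $t=s_3$, $p^*=(s,t)$. This construction is plainly polynomial. Before proving equivalence I would record two preliminary facts: (i) any edge already joining two terminals in $G$ must lie in every valid cut, so I fold its removal cost into an adjusted budget $\hat{b}$; and (ii) since $b\leq\wall$ may be assumed (otherwise all edges can be removed trivially) and each inflated edge costs strictly more than $\wall$, no inflated edge is ever removed within budget.

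The core of the proof is the biconditional linking the two instances, which I would split into its two directions. Forward: assuming a cut $E'$ disconnecting the three terminals with total weight at most $\hat{b}$, after removing $E'$ from $\hat{G}$ the only surviving $s_1$–$s_3$ paths are the direct edge $p^*$ (length $2\wall+3\epsilon$) and the detour through $s_2$ (length $2\wall+4\epsilon$), so $p^*$ is the strict shortest path. Reverse: assuming $p^*$ is the strict shortest path, I argue no original inter-terminal path can survive—a surviving $s_1$–$s_3$ path from $G$ would have length at most $\wall<2\wall+3\epsilon$, and a surviving $s_1$–$s_2$ path (length at most $\wall$) extended by the $s_2$–$s_3$ edge would yield length at most $2\wall+2\epsilon<2\wall+3\epsilon$, each contradicting the optimality of $p^*$; the $s_2$–$s_3$ case is symmetric. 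Hence the removed edges disconnect all three terminals in $G$, completing the equivalence.

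The main obstacle is the precise weight bookkeeping that forces every inequality to point the right way at once. The additive $\epsilon$ offsets must simultaneously guarantee that the direct edge beats the two-hop detour by a positive margin while any reintroduced original path still strictly undercuts $p^*$; reconciling these margins—and checking that folding pre-existing terminal-edge costs into $\hat{b}$ preserves the equivalence in both directions—is the delicate part. By contrast, the structural claim that no original inter-terminal path can persist follows cleanly once these strict inequalities are nailed down.
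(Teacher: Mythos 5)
Your proposal is correct and follows essentially the same route as the paper: a polynomial reduction from 3-Terminal Cut using the same augmented graph $\hat{G}$, the same inflated weights $\wall+2\epsilon$ and $2\wall+3\epsilon$, the same budget adjustment $\hat{b}$ for pre-existing inter-terminal edges, and the same two-directional equivalence argument (the paper splits these into Lemmas~\ref{lem:3cutToForceEdge} and~\ref{lem:ForcePathTo3Cut}, and you additionally supply the routine NP-membership check the paper leaves implicit). The one place you assert what the paper proves is the forward direction's claim that after the cut the only surviving $s_1$--$s_3$ paths are the two built from inflated edges; the paper justifies this with a small partition lemma (Lemma~\ref{lem:edgePartitionPath}) showing that a simple path between terminals cannot mix in original edges once the terminals are disconnected in the original edge set, a step you should spell out but which does not change the argument.
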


To solve 3-Terminal Cut, we are given the graph $G=(V, E)$, with weights $w(e)>0$ for all $e\in E$, and a budget $b>0$, along with three terminal nodes $s_1, s_2, s_3\in V$ (following notation from~\cite{Dahlhaus1994}). We will create a new graph $\hat{G}$ from $G$ and use it as an input for Force Path Cut. Let $\wall$ be the sum of all edge weights, i.e., $\wall=\sum_{e\in E}{w(e)}$. Note that if there are any edges between $s_1$, $s_2$, and $s_3$, these edges must be removed in the solution (any solution that does not remove them does not isolate the terminals from one another). Let $E^\prime_t$ be the set of any such edges, i.e.,
\begin{equation}
E^\prime_t=E\cap \left\{\{s_1, s_2\}, \{s_2, s_3\},  \{s_1, s_3\}\right\}.\label{eq:interterminalEdges}
\end{equation}
Since any edges between the terminals must be removed, the budget for removing the remaining edges must be reduced, yielding a new budget 
\begin{equation}
    \hat{b}=b-\sum_{e\in E^\prime_t}{w(e)}.
\end{equation}
Finally, we add edges between the terminal nodes with specific weights: an edge between $s_1$ and $s_2$ and one between $s_2$ and $s_3$, each with weight $\wall+2\epsilon$ for some $\epsilon > 0$, and an edge between $s_1$ and $s_3$ with weight $2\wall+3\epsilon$. Let $\hat{E}$ be the edge set with the new edges added and $\hat{w}$ be the new set of edge weights (with all other weights retained from the original graph). The new graph $\hat{G}=\left(V, \hat{E}\right)$ will be an input to Force Path Cut, with budget $\hat{b}$, starting node $s=s_1$, destination node $t=s_3$, and target edge $p^*=(s_1, s_3)$. Pseudocode for this procedure is provided in Algorithm~\ref{alg:createInput}.
\begin{algorithm}
\begin{algorithmic}
\STATE \textbf{Input:} Graph $G=(V, E)$, weights $w$, budget $b\geq0$, terminals $s_1, s_2, s_3\in V$
\STATE \textbf{Output:} Graph $\hat{G}$, weights $\hat{w}$, unused edges $E^\prime_t$
\STATE $\wall\gets\sum_{e\in E}{w(e)}$
\STATE $E^\prime_t\gets E\cap \left\{\{s_1, s_2\}, \{s_2, s_3\},  \{s_1, s_3\}\right\}$
\STATE $\hat{E}\gets E\setminus E^\prime_t$
\FORALL{$e\in\hat{E}$}
\STATE $\hat{w}(e)\gets w(e)$
\ENDFOR
\STATE $e_{12}\gets \{s_1, s_2\}$\ \ \ $\langle\langle$create new edges$\rangle\rangle$
\STATE $\hat{w}(e_{12})\gets \wall+2\epsilon$
\STATE $e_{23}\gets \{s_2, s_3\}$
\STATE $\hat{w}(e_{12})\gets \wall+2\epsilon$
\STATE $e_{13}\gets \{s_1, s_3\}$
\STATE $\hat{w}(e_{12})\gets 2\wall+3\epsilon$
\STATE $\hat{E}\gets\hat{E}\cup\{e_{12}, e_{23}, e_{13}\}$
\RETURN $\hat{G}=\left(V, \hat{E}\right), \hat{w}, E^\prime_t$
\end{algorithmic}
\caption{Create Force Path Cut input graph}
\label{alg:createInput}
\end{algorithm}

\begin{lemma}
Let $G=(V, E)$ be an undirected graph. For any node subset $V_s\subset V$, if $E$ can be partitioned $E=E_s\cup E_{\bar{s}}$, $E_s\cap E_{\bar{s}}=\emptyset$, such that (1) all edges between nodes in $V_s$ are in $E_s$ and (2) there is no path between any two nodes in $V_s$ within $E_{\bar{s}}$, then all simple paths between nodes in $V_s$ use only edges in $E_s$.\label{lem:edgePartitionPath}
\end{lemma}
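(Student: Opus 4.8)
The plan is to prove the statement by contradiction through a single boundary-crossing argument. I would suppose that some simple path $p$ joining two nodes $u,v\in V_s$ uses at least one edge of $E_{\bar{s}}$, and then derive a violation of condition~(2). The first fact to extract from condition~(1) is that no edge of $E_{\bar{s}}$ can have both endpoints in $V_s$: any such edge is an edge between nodes of $V_s$ and is therefore forced into $E_s$. Equivalently, every edge of $E_{\bar{s}}$ is incident to at least one vertex outside $V_s$, and any edge on $p$ whose two endpoints both lie in $V_s$ must belong to $E_s$.

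Next I would decompose $p$ according to which of its vertices lie in $V_s$. Writing $p=(u=x_0,x_1,\dots,x_k=v)$, the indices $i$ with $x_i\in V_s$ include $0$ and $k$; listing them in increasing order splits $p$ into consecutive segments, each running between two successive vertices of $V_s$ and having all of its interior vertices outside $V_s$. A segment that is a single edge has both endpoints in $V_s$ and hence lies in $E_s$ by the observation above. Consequently the offending $E_{\bar{s}}$ edge must sit inside a segment that has at least one interior vertex.

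I would then fix such a segment, say from $a$ to $b$ with $a,b\in V_s$ and every interior vertex outside $V_s$; since $p$ is simple, $a\neq b$. The key check is that every edge of this segment lies in $E_{\bar{s}}$: each of its edges is incident to an interior vertex, which is not in $V_s$, so it is not an edge between two nodes of $V_s$ and therefore belongs to $E_{\bar{s}}$. This exhibits a path from $a$ to $b$ using only edges of $E_{\bar{s}}$ between two distinct nodes of $V_s$, contradicting condition~(2). Hence no simple path between nodes of $V_s$ can traverse an edge of $E_{\bar{s}}$, which is exactly the claim.

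The step I expect to be the crux is placing the segment's edges in $E_{\bar{s}}$, since this is where the partition must be read as sending precisely the edges internal to $V_s$ into $E_s$ (so that an edge touching a vertex outside $V_s$ necessarily lands in $E_{\bar{s}}$). If $E_s$ were permitted to absorb stray edges incident to non-$V_s$ vertices, a segment edge could escape $E_{\bar{s}}$ and the contradiction would fail to close; so essentially all of the care in the argument concentrates on justifying that the interior-incident edges of the chosen segment belong to $E_{\bar{s}}$, after which the boundary-crossing contradiction is routine.
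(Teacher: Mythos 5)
Your argument is correct and follows essentially the same route as the paper's proof: both isolate a maximal excursion of the path outside $V_s$ (you via the segment decomposition, the paper by taking the first $E_{\bar{s}}$ edge and the first subsequent return to $V_s$) and exhibit an $E_{\bar{s}}$-path between two distinct nodes of $V_s$, contradicting condition (2). The reading you flag as the crux---that $E_s$ consists exactly of the edges internal to $V_s$, so that any edge incident to a vertex outside $V_s$ must lie in $E_{\bar{s}}$---is indeed required for the conclusion (condition (1) as literally written does not force it), and the paper's proof relies on the same implicit strengthening when it asserts that edges in $E_s$ ``only connect nodes within $V_s$''; this is harmless in context because the partition used in Lemma~\ref{lem:3cutToForceEdge} has exactly that form.
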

\begin{proof}
Suppose a simple path existed between two nodes $u,v\in V_s$ that included edges in $E_{\bar{s}}$. Let $d$ be the number of edges in this path and let $e_i\in E$ for $1\leq i\leq d$ be the sequence of edges starting from $u$ and ending at $v$. Let $j$ be the sequential index of the first edge in $E_{\bar{s}}$ that appears in the path. This edge goes from a vertex in $V_s$ to a vertex in $V\setminus V_s$. (Any edges occurring beforehand are in $E_s$, so only connect nodes within $V_s$, and $e_j$ is in $E_{\bar{s}}$, so at least one vertex is outside of $V_s$.) Let $e_j=\{u_j, v_j\}$, where $u_j\in V_s$ and $v_j\in V\setminus V_s$. Finally, let $e_k$ be the first edge after $e_j$ that connects a node from $V\setminus V_s$ to a node from $V_s$, i.e., the minimum $k > j$ such that $e_k=\{u_k, v_k\}$ where $v_1\in V_s$ or $v_2\in V_s$. We note that such an edge must exist, since the final node in the sequence is in $V_s$. Note also that $e_k\in E_{\bar{s}}$, since one of its vertices is in $V\setminus V_s$. Without loss of generality, let $u_k\in V\setminus V_s$ and $v_k\in V_s$. The edges in the sequence $e_i$ for $j\leq i \leq k$ form a path from $u_j\in V_s$ to $v_k\in V_s$ using only edges in $E_{\bar{s}}$. By the assumption of the lemma there is no path between any two nodes in $V_s$ using edges in $E_{\bar{s}}$, this means that $u_j$ and $v_k$ must be the same node, which contradicts the premise that the path is simple. This completes the proof.
\end{proof}

\begin{algorithm}
\begin{algorithmic}
\STATE \textbf{Input:} Graph $G=(V, E)$, weights $w$, budget $b\geq0$, terminals $s_1, s_2, s_3\in V$
\STATE \textbf{Output:} Boolean value indicating whether the 3 terminals can be separated
\STATE $\hat{G}=\left(V, \hat{E}\right), \hat{w}, E^\prime_t\gets$ output of Algorithm~\ref{alg:createInput}
\STATE $\hat{b}\gets b-\sum_{e\in E^\prime_t}{w(e)}$
\IF{$\hat{b}<0$}
\RETURN \FALSE\ \ \ \ $\langle\langle$budget is too small$\rangle\rangle$
\ENDIF
\STATE $s\gets s_1$
\STATE $t\gets s_3$
\STATE $p^*\gets(s_1, s_3)$
\STATE \textbf{return} ForcePathCut$\left(\hat{G}, \hat{w}, \hat{w}, \hat{b}, p^*, s,t\right)$
\end{algorithmic}
\caption{Solve 3-Terminal Cut via Force Path Cut}
\label{alg:reduction}
\end{algorithm}

\begin{lemma}
Existence of a solution to 3-Terminal Cut implies existence of a solution to Force Path Cut in the graph modified by Algorithm~\ref{alg:createInput}.\label{lem:3cutToForceEdge}
\end{lemma}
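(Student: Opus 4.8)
The plan is to start from any feasible 3-Terminal Cut solution and build an explicit feasible Force Path Cut solution on the modified graph $\hat{G}$. Suppose $E^\prime\subset E$ disconnects $s_1,s_2,s_3$ in $G^\prime=(V,E\setminus E^\prime)$ while satisfying $\sum_{e\in E^\prime}w(e)\le b$. The first thing I would record is that $E^\prime$ must contain every surviving inter-terminal edge, i.e.\ $E^\prime_t\subseteq E^\prime$: if any edge directly joining two terminals were left in place, those terminals would remain adjacent, contradicting disconnection. I would then propose the cut $\hat{E}^\prime=E^\prime\setminus E^\prime_t$. This set lies in $\hat{E}$ (it is contained in $E\setminus E^\prime_t\subseteq\hat{E}$) and does not contain the target edge $e_{13}$, so the constraint that $p^*$ is never cut holds automatically.

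The budget check is a one-line calculation. Since removal costs equal weights and $E^\prime_t\subseteq E^\prime$,
\[
\sum_{e\in\hat{E}^\prime}\hat{c}(e)=\sum_{e\in E^\prime}w(e)-\sum_{e\in E^\prime_t}w(e)\le b-\sum_{e\in E^\prime_t}w(e)=\hat{b}.
\]
The same inequality chain also shows $\hat{b}\ge 0$, so the reduced instance is well-formed.

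The substantive step is to verify that $p^*=(s_1,s_3)$ is the \emph{exclusive} shortest $s_1$--$s_3$ path in $\hat{G}^\prime=(V,\hat{E}\setminus\hat{E}^\prime)$. I would first simplify the surviving edge set to $\hat{E}\setminus\hat{E}^\prime=(E\setminus E^\prime)\cup\{e_{12},e_{23},e_{13}\}$, using $E^\prime_t\subseteq E^\prime$ and the fact that the three added edges are new (disjoint from $E$). Then I would apply Lemma~\ref{lem:edgePartitionPath} with $V_s=\{s_1,s_2,s_3\}$, taking $E_s=\{e_{12},e_{23},e_{13}\}$ and $E_{\bar s}=E\setminus E^\prime$. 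Condition (1) holds because the only inter-terminal edges that survive are the three new ones, and condition (2) is exactly the hypothesis that the terminals are pairwise disconnected in $E\setminus E^\prime$. The lemma then guarantees that every simple $s_1$--$s_3$ path in $\hat{G}^\prime$ uses only the triangle $\{e_{12},e_{23},e_{13}\}$, leaving only two candidates: $p^*$, of length $2\wall+3\epsilon$, and $(s_1,s_2,s_3)$, of length $(\wall+2\epsilon)+(\wall+2\epsilon)=2\wall+4\epsilon$. Because $\epsilon>0$, $p^*$ is strictly shorter, hence the unique shortest path, which establishes feasibility of Force Path Cut.

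The main obstacle is the reduction to triangle-only paths; everything else is bookkeeping. Lemma~\ref{lem:edgePartitionPath} is precisely the device that turns the disconnection hypothesis into the clean statement that inter-terminal travel must ride the three added edges, and without it one would have to rule out by hand the possibility that leftover original edges splice together a short detour between terminals. Once that reduction is in hand, the length comparison depends only on having inflated the new weights by the right amount---the $+2\epsilon$ and $+3\epsilon$ offsets are chosen so that the via-$s_2$ route strictly exceeds the direct edge---so I would simply confirm that arithmetic and conclude.
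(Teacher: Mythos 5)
Your proposal is correct and follows essentially the same route as the paper's own proof: keep $\hat{E}^\prime=E^\prime\setminus E^\prime_t$ as the cut set, verify the budget by the same telescoping inequality, invoke Lemma~\ref{lem:edgePartitionPath} with the identical partition (triangle edges versus surviving original edges) to confine inter-terminal paths to the added triangle, and finish with the $2\wall+3\epsilon<2\wall+4\epsilon$ comparison. The only additions are minor bookkeeping the paper leaves implicit (that $e_{13}\notin\hat{E}^\prime$ and that $\hat{b}\geq 0$), which do no harm.
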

\begin{proof}
Let $E^\prime\subset E$ be a solution to 3-Terminal Cut, i.e., a set of edges such that $\sum_{e\in E^\prime}{w(e)}\leq b$ and in the graph $G^\prime=(V, E\setminus E^\prime)$ there is no path connecting any of the terminal nodes $s_1$, $s_2$, and $s_3$. Any edges from $G$ that directly connect the terminals must be in $E^\prime$ or such a path would exist. Letting $E^\prime_t$ be the set of any such edges (as in (\ref{eq:interterminalEdges})), this means that $E^\prime_t\subset E^\prime$.

The graph as modified by Algorithm~\ref{alg:createInput} includes edges $$\hat{E}=\left(E\setminus E^\prime_t\right)\cup\{e_{12},e_{13},e_{23}\},$$ with weights $w(e_{12})=\wall+2\epsilon$, $w(e_{23})=\wall+2\epsilon$, and $w(e_{13})=2\wall+3\epsilon$. By the assumption of the lemma, removing all edges in $E^\prime$ disconnects the terminals from one another. Algorithm~\ref{alg:createInput} starts by removing $E^\prime_t$ and adds 3 new edges to the graph, resulting in the graph $\hat{G}$. Consider a partition of  $\hat{E}$---the edges in $\hat{G}$---into 3 subsets:
\begin{align}
    E_1&=\{e_{12}, e_{23}, e_{13}\}\label{eq:E1}\\
    E_2&=(E^\prime\setminus E^\prime_t)\\
    E_3&=E\setminus E^\prime.\label{eq:E3}
\end{align}
Note that $\hat{E}=E_1\cup E_2\cup E_3$ and
\begin{equation}
    E_1\cap E_2=E_2\cap E_3=E_1\cap E_3=\emptyset,
\end{equation}
so (\ref{eq:E1})--(\ref{eq:E3}) describe a proper partition. Suppose the Force Path Cut procedure removes the edges in $E_2$ from $\hat{G}$. Note that 
\begin{align}
    \sum_{e\in E^\prime}{w(e)}\leq b&\Rightarrow\sum_{e_1\in E^\prime_t}{w(e)}+\sum_{e_2\in E^\prime\setminus E^\prime_t}{w(e)}\leq b\\
    &\Rightarrow\sum_{e_2\in E^\prime\setminus E^\prime_t}{w(e)}\leq b-\sum_{e_1\in E^\prime_t}{w(e)}\\
    &\Rightarrow\sum_{e_2\in E_2}{\hat{w}(e)}\leq \hat{b},
\end{align}
with $\hat{b}$ as defined in Algorithm~\ref{alg:reduction}. This implies that the edges in $E_2$ would be within the budget allocated to Force Path Cut.

After removing the edges in $E_2$, the remaining edges in $\hat{G}$ would be $E_1\cup E_3$. The set  $E_1$ only includes edges among terminals and, by the assumption of the lemma, $E_3$ does not include any path between any two terminal nodes. Thus, by Lemma~\ref{lem:edgePartitionPath}, any path from $s_1$ to $s_3$ after removing $E_2$ includes only edges in $E_1$.

There are therefore two possible paths from $s_1$ to $s_3$: $s_1\rightarrow s_3$ and $s_1\rightarrow s_2\rightarrow s_3$. The latter path has weight
\begin{equation}
    \hat{w}(e_{12})+\hat{w}(e_{23})=2(\wall+2\epsilon)>2\wall+3\epsilon=\hat{w}(e_{13}),
\end{equation}
and thus the former is the shortest path from $s_1$ to $s_3$ in $\hat{G}$. Since $p^*=(s_1, s_3)$, the shortest path from $s=s_1$ to $t=s_3$ is $p^*$, meaning that if there is a solution to 3-Terminal Cut in $G$, there is a solution to Force Path Cut in $\hat{G}$. 
\end{proof}

\begin{lemma}
A solution to Force Path Cut in the graph modified by Algorithm~\ref{alg:createInput} implies a solution to 3-Terminal Cut in the original graph.\label{lem:ForcePathTo3Cut}
\end{lemma}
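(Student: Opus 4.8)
The plan is to take an arbitrary Force Path Cut solution $\hat{E}^\prime$ on the modified graph $\hat{G}$ and read off from it a 3-Terminal Cut solution on the original $G$, namely $E^\prime = \hat{E}^\prime \cup E^\prime_t$. The first step is to argue that $\hat{E}^\prime$ can contain only original edges. Since we may assume $b \leq \wall$ (and hence $\hat{b} \leq \wall$), while each of the three added edges $e_{12}, e_{23}, e_{13}$ has removal cost $\hat{c} = \hat{w}$ at least $\wall + 2\epsilon > \hat{b}$, none of them can appear in a within-budget cut. Thus $\hat{E}^\prime \subseteq E \setminus E^\prime_t$, and in particular the surviving graph $\hat{G}^\prime = (V, \hat{E} \setminus \hat{E}^\prime)$ still contains $e_{12}$, $e_{23}$, and $p^* = e_{13}$.

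Second, I would verify the budget for $E^\prime$. Because $\hat{E}^\prime$ consists only of original edges, $\hat{w}$ agrees with $w$ on it, so $\sum_{e \in \hat{E}^\prime} w(e) = \sum_{e \in \hat{E}^\prime} \hat{c}(e) \leq \hat{b} = b - \sum_{e \in E^\prime_t} w(e)$; adding back the forced inter-terminal edges $E^\prime_t$ (which are disjoint from $\hat{E}^\prime$) gives $\sum_{e \in E^\prime} w(e) \leq b$, matching the original 3-Terminal Cut budget.

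Third, and this is the crux, I would show that removing $E^\prime$ disconnects the three terminals in $G$, arguing by contradiction. Suppose some path of original edges joined a pair of terminals in $G^\prime = (V, E \setminus E^\prime)$. Every edge of such a path also survives in $\hat{G}^\prime$, and any path built from original edges has total weight at most $\wall$. I would then split into three cases according to which pair is connected. If $s_1$ and $s_3$ were connected, that path alone is an $s_1$-$s_3$ route of weight at most $\wall < 2\wall + 3\epsilon = \hat{w}(p^*)$. If instead $s_1$ and $s_2$ (resp.\ $s_2$ and $s_3$) were connected, appending the surviving edge $e_{23}$ (resp.\ $e_{12}$) yields an $s_1$-to-$s_3$ walk of weight at most $\wall + (\wall + 2\epsilon) = 2\wall + 2\epsilon < 2\wall + 3\epsilon$. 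In every case there is an $s_1$-$s_3$ walk, hence (since weights are nonnegative) a competing simple path strictly shorter than $p^*$ and distinct from it, contradicting that $p^*$ is the exclusive shortest path in $\hat{G}^\prime$. Therefore no two terminals remain connected, so $E^\prime$ solves 3-Terminal Cut.

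I expect the main obstacle to be the case analysis in the third step: one must confirm that the two high-weight edges $e_{12}, e_{23}$ genuinely survive the cut (handled by the budget argument of the first step) and that concatenating an original-edge path with one of them produces a genuine $s_1$-$s_3$ route strictly shorter than $2\wall + 3\epsilon$. The nonnegativity of weights is what lets me pass from such a walk to a competing simple path, which is precisely the competitor forbidden by the Force Path Cut success condition. This argument mirrors, in the reverse direction, the edge-partition reasoning of Lemma~\ref{lem:edgePartitionPath}.
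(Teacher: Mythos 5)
Your proof is correct and follows essentially the same route as the paper's: a proof by contradiction showing that any surviving path of original edges between two terminals would yield an $s_1$--$s_3$ route of length at most $2\wall+2\epsilon < \hat{w}(e_{13})$ (via the identical three-case analysis), combined with the same budget bookkeeping to recover $\sum_{e\in E^\prime}w(e)\leq b$. The only difference is that you make explicit, using the assumption $b\leq\wall$, why the added edges $e_{12}$, $e_{23}$, $e_{13}$ cannot be within the cut and hence survive---a point the paper's formal proof leaves implicit (it sidesteps it for the disconnection step by removing all three added edges anyway, but quietly relies on $\hat{w}=w$ over $\hat{E}^\prime$ in the budget step).
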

\begin{proof}
Given a graph $G=(V, E)$, weights $w$, a budget $b$, and terminals $$s_1, s_2, s_3\in V,$$ use Algorithm~\ref{alg:createInput} to compute $\hat{G}$, $\hat{w}$, and $E^\prime_t$. As in Algorithm~\ref{alg:reduction}, compute $\hat{b}$, set $s$, $t$, and $p^*$ and solve Force Path Cut. Let $\hat{E}^\prime$ be the edges that are cut when solving the problem, meaning (1) the shortest path from $s=s_1$ to $t=s_3$ is $p^*$  and (2) $\sum_{e\in\hat{E}^\prime}{\hat{w}(e)}\leq\hat{b}$.

After removing the edges, consider a partition of the remaining edge set into $\{e_{12}, e_{23}, e_{13}\}$  and its complement
\begin{equation}
    \hat{E}_{\bar{t}}=(\hat{E}\setminus\hat{E}^\prime)\setminus \{e_{12}, e_{23}, e_{13}\}.
\end{equation}
Within $\hat{E}_{\bar{t}}$, there is no path between any two terminal nodes. If there were any such path, it would have length at most $\wall$, since 
\begin{equation}
    \sum_{e\in\hat{E}_{\bar{t}}}{\hat{w}(e)}\leq \wall.
\end{equation}
Existence of such a path would have at least one of the following implications:
\begin{itemize}
    \item If such a path $p$ existed between $s_1$ and $s_2$, then $p$ followed by $e_{23}$ would be a path from $s_1$ to $s_3$ with length at most
    \begin{equation}
        \wall+\hat{w}(e_{23})=2\wall+2\epsilon<\hat{w}(e_{13}).
    \end{equation}
    This implies that $s_1\rightarrow s_3$ is not the shortest path from $s_1$ to $s_3$, thus contradicting the assumption of the lemma.
    \item The analogous case for a path from $s_2$ to $s_3$ yields an analogous contradiction.
    \item If such a path existed between $s_1$ and $s_3$, its length would be at most $\wall<\hat{w}(e_{13})$, again contradicting the assumption that $s_1\rightarrow s_3$ is the shortest path from $s_1$ to $s_3$ in $\hat{G}^\prime$.
\end{itemize}
Thus, if we remove $\hat{E}^\prime$ from $\hat{G}$, the only edges connecting the terminal nodes are $e_{12}$, $e_{23}$, and $e_{13}$. In other words, removing $\hat{E}^\prime$ and $\{e_{12}, e_{23}, e_{13}\}$ from $\hat{E}$ will result in the terminals being disconnected from one another. Note that $E$ and $\hat{E}$ are the same after removing nodes between the terminals, i.e.,
\begin{equation}
    E\setminus E_t^\prime = \hat{E}\setminus\{e_1, e_2, e_3\}
\end{equation}
This means that
\begin{align}
    (\hat{E}\setminus\hat{E}^\prime)\setminus\{e_1, e_2, e_3\}&=(\hat{E}\setminus\{e_1, e_2, e_3\})\setminus\hat{E}^\prime\\
    &=(E\setminus E_t^\prime)\setminus \hat{E}^\prime\\
    &=E\setminus(E_t^\prime\cup \hat{E}^\prime).
\end{align}
Thus, removing $E_t^\prime$ and $\hat{E}^\prime$ from the original graph results in a graph with no path between any two terminals. Recall that the assumption of the lemma requires:
\begin{align}
    &\sum_{e\in\hat{E}^\prime}{\hat{w}(e)}\leq\hat{b}=b-\sum_{e\in E_t^\prime}{w(e)}\\
    \Rightarrow &\sum_{e_1\in E_t^\prime}{w(e_1)}+\sum_{e_2\in\hat{E}^\prime}{\hat{w}(e_2)}\leq b\\
    \Rightarrow&\sum_{e\in E_t^\prime\cup\hat{E}^\prime}{w(e)}\leq b.
\end{align}
Removing these edges is, therefore, within the budget allocated. 
\end{proof}

We have now proven that the Theorem Algorithm~\ref{alg:reduction} is a polynomial-time reduction from 3-Terminal Cut to Force Path Cut, as Lemmas~\ref{lem:3cutToForceEdge} and~\ref{lem:ForcePathTo3Cut} have shown. Since 3-Terminal Cut is NP-complete, this implies Force Path Cut is NP-complete as well.

\section{Datasets}
\label{sec:datasets}
Our experiments were run on several synthetic and real networks across different edge-weight initialization. All networks are undirected. We described the edge-weight initialization schemes in Section~\ref{sec:graphdata}. Table \ref{table:syn_net_prop} provides summary statistics of the synthetic networks.

\begin{table}[!ht]
\renewcommand{\arraystretch}{1.0}
\centering
\begin{tabular}{|l|c|c|c|c|c|c|c|c|} 
\hline
Networks & Nodes & Edges & $\langle k\rangle$ & $\sigma_k$ & $\kappa$ & $\tau$ & $\triangle$ & $\varphi$ \\
\hline \hline
ER  & $16,000$ & $159,880$ & $19.985$ & $4.469$ & $0.001$ & $0.001$ & $1,326$ & $1$ \\
 & $\pm 0$ & $\pm38$ & $\pm0.05$ & $\pm0.02$ & $\pm0.0$ & $\pm0.0$ & $\pm39$ & $\pm0$ \\
\hline
BA  & $16,000$ & $159,900$ & $19.987$ & $24.475$ & $0.007$ & $0.006$ & $17,133$ & $1$ \\
 & $\pm0$ & $\pm0$ & $\pm0$ & $\pm0.3$ & $\pm0.0$ & $\pm0$ & $\pm500$ & $\pm0$ \\
\hline
KR  & $16,337$ & $159,595$ & ~$19.537$~ & ~$16.537$~ & ~$0.003$~ & $0.005$ & $8,492$ & ~$1.18$~ \\
& $\pm22$ & $\pm94$ & $\pm0.02$ & $\pm1.32$ & $\pm0$ & ~$\pm0.002$~ & $\pm2,234$ & ~$\pm0.38$~ \\ 
\hline
LAT  & ~$81,225$~ & $161,880$ & $3.985$ & $0.118$ & $0$ & $0$ & $0$ & $1$ \\
 & $\pm0$& $\pm0$& $\pm0$& $\pm0$& $\pm0$& $\pm0$& $\pm0$& $\pm0$ \\ 
\hline
COMP~  & $565$ & ~$159,330$~ & $564$ & $0$ & $1$ & $1$ & ~$29,900,930$~ & $1$ \\
& $\pm0$& $\pm0$& $\pm0$& $\pm0$& $\pm0$& $\pm0$& $\pm0$& $\pm0$ \\
\hline
\end{tabular}
\caption{Properties of the synthetic networks used in our experiments. Our random graph models are: ER (Erd\"{o}s-R\'{e}nyi), BA (Barab\'{a}si-Albert), KR (Stochastic Kronecker), LAT (lattice), and COMP (complete, a.k.a.~clique). For each random graph model, we generate 100 networks. Note that the number of edges across the different networks is $\approx160$K. The table shows the average degree ($\langle k\rangle$), standard deviation of the degree ($\sigma_k$), global clustering coefficient ($\kappa$), transitivity ($\tau$), number of triangles ($\triangle$), and the number of components ($\varphi$). The $\pm$ values show the standard deviation across 100 runs of each random graph model.}
\label{table:syn_net_prop}
\end{table}

We ran experiments on both weighted and unweighted real networks. In cases of unweighted networks, we added either Poisson, uniformly distributed, or equal weights as was the case of synthetic networks. Below, is a brief description of each network used. Table~\ref{table:net_prop} summarizes the properties of each network. 

The unweighted networks are: \textbf{Wikispeedia graph (WIKI)}: The network consists of Web pages (nodes) and connections (edges) created from the user-generated paths in the Wikispeedia game~\cite{West2009}. Available at \url{https://stanford.io/3cLKDb7}. \textbf{Oregon autonomous system network (AS)}: Nodes represent autonomous systems of routers and edges denote communication between the systems~\cite{Leskovec2005}. The dataset was collected at the University of Oregon on 31 March 2001. Available at \url{https://stanford.io/3rLItfN}. \textbf{Pennsylvania road network (PA-ROAD)}: Nodes are intersections in Pennsylvania, connected by edges representing roads~\cite{Leskovec2009}. Available at \url{https://stanford.io/31Jnb7W}.

\begin{table}[ht]
\centering
\begin{tabular}{|l|c|c|c|c|c|c|c|c|} 
\hline
Networks & Nodes & Edges &  $\langle k\rangle$ & $\sigma_k$ & $\kappa$ & $\tau$ & $\triangle$ & $\varphi$ \\
\hline \hline
GRID & 347 & 444 & 2.559 & 1.967 & 0.086 & 0.087 & 40 & 1 \\ 
\hline
LBL & 3186 & 9486 & 5.954 & 25.515 & 0.099 & 0.005 & 1821 & 10 \\
\hline
WIKI & 4,592 & 106,647 & ~46.449~ & 69.878 & 0.274 & 0.102 & 550,545 & 2 \\
\hline
AS & 10,670 & 22,002 & 4.124 & 31.986 & 0.296 & 0.009 & 17,144 & 1 \\
\hline
PA-ROAD~ & ~1,088,092~ & ~1,541,898~ & 2.834 & 1.016 & 0.046 & 0.059 & 67,150 & 206 \\
\hline
NEUS & 1,524,453 & 1,934,010 & 2.537 & 0.950  & 0.022 & 0.030 & 37,012 & 1 \\
\hline
DBLP & 1,836,596 & 8,309,938 & 9.049 & ~21.381~ & ~0.631~ & ~0.165~ & ~26,912,200~ &  ~60,512~ \\ 
\hline
\end{tabular}
\caption{Properties of the real networks used in our experiments. For each network, we are listing the average degree ($\langle k\rangle$), standard deviation of the degree ($\sigma_k$), global clustering coefficient ($\kappa$), transitivity ($\tau$), number of triangles ($\triangle$), and the number of components ($\varphi$).}
\label{table:net_prop}
\end{table}
\vspace*{-12pt}

The weighted networks are: \textbf{Central Chilean Power Grid (GRID)}: Nodes represent power plants, substations, taps, and junctions in the Chilean power grid. Edges  represent transmission lines, with distances in kilometers~\cite{Kim2018}. The capacity of each line in kilovolts is also provided. Available at \url{https://bit.ly/2OjqCPO}. \textbf{Lawrence Berkeley National Laboratory network data (LBL)}: A graph of computer network traffic, which includes counts of the number of connections between machines over time. Counts are inverted for use as distances. Available at \url{https://bit.ly/2PQbOsr}. \textbf{Northeast US Road Network (NEUS)}: Nodes are intersections in the northeastern part of the United States, interconnected by roads (edges), with weights corresponding to distance in kilometers. Available at \url{https://bit.ly/2QWcug9}. \textbf{DBLP coauthorship graph (DBLP)}: This is a co-authorship network~\cite{Benson2018}. We invert the number of co-authored papers to create a distance (rather than similarity) between the associated authors. Available at \url{https://bit.ly/3fytXFS}.

\end{document}